\newcommand{\BE}{\begin{equation}\begin{aligned}}
\newcommand{\EE}{\end{aligned}\end{equation}}
\newcommand{\BEs}{\begin{equation*}\begin{aligned}}
\newcommand{\EEs}{\end{aligned}\end{equation*}}
\pgfplotsset{compat=newest}
\acrodef{asd}[ASD]{anomalous sound detection}
\acrodef{gmm}[GMM]{Gaussian mixture model}
\acrodef{dg}[DG]{domain generalization}
\acrodef{auc}[AUC]{area under the \ac{roc} curve}
\acrodef{pca}[PCA]{principal component analysis}
\acrodef{lda}[LDA]{linear discriminant analysis}
\acrodef{lime}[LIME]{local interpretable model-agnostic explanations}
\acrodef{slime}[SLIME]{sound \ac{lime}}
\acrodef{dnn}[DNN]{deep neural network}
\acrodef{cd}[CD]{cosine distance}
\acrodef{ln}[LN]{length normalization}
\acrodef{rise}[RISE]{randomized input sampling for explanation}
\acrodef{roc}[ROC]{receiver operating characteristic}
\acrodef{tsne}[t-SNE]{t-distributed stochastic neighbor embedding}
\acrodef{xai}[xAI]{explainable artificial intelligence}
\acrodef{pauc}[pAUC]{partial area under the \ac{roc} curve}
\acrodef{umap}[UMAP]{uniform manifold approximation and projection}
\acrodef{oe}[OE]{outlier exposure}
\acrodef{im}[IM]{inlier modeling}
\acrodef{ic}[IC]{intra-class}
\acrodef{cce}[CCE]{categorical cross-entropy}
\theoremstyle{plain}
\newtheorem{thm}{Theorem}
\newtheorem{lem}[thm]{Lemma}
\newtheorem{cor}[thm]{Corollary}
\theoremstyle{definition}
\newtheorem{defn}[thm]{Definition}
\theoremstyle{remark}
\newtheorem*{rem}{Remark}
\DeclareMathOperator*{\argmax}{arg\,max}
\DeclareMathOperator*{\softmax}{smax}
\begin{document}

\title{Why do Angular Margin Losses work well for Semi-Supervised Anomalous Sound Detection?}

\author{Kevin Wilkinghoff \orcidlink{0000-0003-4200-9129},~\IEEEmembership{Student Member,~IEEE}, and Frank Kurth  \orcidlink{0000-0002-9992-083X},~\IEEEmembership{Senior Member,~IEEE}
        % <-this % stops a space
\thanks{The authors are with Fraunhofer FKIE, Fraunhoferstraße 20, 53343 Wachtberg, Germany  (e-mail: kevin.wilkinghoff@ieee.org, frank.kurth@fkie.fraunhofer.de).}% <-this % stops a space
%\thanks{Manuscript received March 27, 2023; revised September 19, 2023 and November 20, 2023.}
}

% The paper headers
%\markboth{IEEE/ACM TRANSACTIONS ON AUDIO, SPEECH, AND LANGUAGE PROCESSING,~Vol.~14, No.~8, August~2021}%
%{Shell \MakeLowercase{\textit{et al.}}: A Sample Article Using IEEEtran.cls for IEEE Journals}

%\IEEEpubid{0000--0000/00\$00.00~\copyright~2021 IEEE}
% Remember, if you use this you must call \IEEEpubidadjcol in the second
% column for its text to clear the IEEEpubid mark.

\maketitle

\begin{abstract}
State-of-the-art anomalous sound detection systems often utilize angular margin losses to learn suitable representations of acoustic data using an auxiliary task, which usually is a supervised or self-supervised classification task.
The underlying idea is that, in order to solve this auxiliary task, specific information about normal data needs to be captured in the learned representations and that this information is also sufficient to differentiate between normal and anomalous samples.
Especially in noisy conditions, discriminative models based on angular margin losses tend to significantly outperform systems based on generative or one-class models.
The goal of this work is to investigate why using angular margin losses with auxiliary tasks works well for detecting anomalous sounds.
To this end, it is shown, both theoretically and experimentally, that minimizing angular margin losses also minimizes compactness loss while inherently preventing learning trivial solutions.
Furthermore, multiple experiments are conducted to show that using a related classification task as an auxiliary task teaches the model to learn representations suitable for detecting anomalous sounds in noisy conditions.
Among these experiments are performance evaluations, visualizing the embedding space with t-SNE and visualizing the input representations with respect to the anomaly score using randomized input sampling for explanation.
\end{abstract}

\begin{IEEEkeywords}
representation learning, anomaly detection, angular margin loss, compactness loss, machine listening, domain generalization, explainable artificial intelligence
\end{IEEEkeywords}

\section{Introduction}
\IEEEPARstart{S}{emi-supervised} \ac{asd} is the task of reliably detecting anomalous sounds while only having access to normal sounds for training a model \cite{aggarwal2017outlier}.
Since anomalies occur only rarely by definition and usually are very diverse, collecting realistic anomalous samples for training a system is much more difficult and thus more costly than collecting normal data.
%Moreover, for most applications it is impossible to cover the full space of possible anomalies.
Hence, a semi-supervised \ac{asd} setting is more realistic than a supervised \ac{asd} setting, for which anomalous sounds are available for training, because it substantially simplifies the data collection process.
There are also unsupervised \ac{asd} settings, for which the training dataset may also contain anomalous samples and it is unknown whether a training sample is normal or anomalous.
But for many applications, it can be ensured that only normal samples are collected for training and thus a semi-supervised setting can be assumed.
\par
\ac{asd} has many applications.
Examples are machine condition monitoring \cite{koizumi2020description,kawaguchi2021description,dohi2022description}, medical diagnosis \cite{murthy2021deep,dissanayake2021robust}, bioacoustic monitoring {\cite{ntalampiras2021acoustic,cejrowski2021buzz}, intrusion detection in smart home environments \cite{zieger2009acoustic} and detecting crimes \cite{foggia2016audio,li2018anomalous} or accidents \cite{valenzise2007scream,hayashi2018anomalous}.
Furthermore, detecting anomalous samples can also be understood as a subtask in acoustic open-set classification \cite{shon2019mce,mesaros2019acoustic,naranjo2022open}.
Throughout this work, we will use machine condition monitoring in domain-shifted conditions as an application example \cite{dohi2022description}.
Here, the audio signals may contain one or several of the following three components: 1) normal machine sounds, 2) anomalous machine sounds and 3) background noise consisting of a mixture of many other sound events.
The major difficulty of this \ac{asd} application is that anomalous components of machine sounds can be very subtle when being compared to the background noise making it difficult to reliably detect anomalous signal components.
Furthermore, machine sounds and background noise can change substantially for different domain shifts, which we define as alterations in the (acoustic) environment or changes in parameter settings of the machines.
The \ac{asd} system still needs to only detect anomalous signal components without frequently raising false alarms caused by any domain shift.
\par
%\IEEEpubidadjcol
There are several strategies to train an \ac{asd} system for machine condition monitoring using only normal data.
Among these strategies are generative models such as autoencoders \cite{marchi2017deep,koizumi2019unsupervised,suefusa2020anomalous,giri2020group,kapka2020id,wichern2021anomalous} or normalizing flows \cite{dohi2021flow,dohi2022disentangling} that directly try to model the probablity distribution of normal data, which is also called \ac{im} \cite{kawaguchi2021description}.
Another strategy is to use an auxiliary task, usually a classification task, for training a model to learn meaningful representations of the data (embeddings) that can be used to identify anomalies.
Possible auxiliary tasks for machine condition monitoring are classifying between machine types \cite{giri2020self,lopez2020speaker,inoue2020detection,zhou2020,wilkinghoff2021sub} or, additionally, between different machine states and noise settings \cite{wilkinghoff2021combining,venkatesh2022improved,nishida2022anomalous,wilkinghoff2023design}, recognizing augmented and not augmented versions of normal data (self-supervised learning) \cite{giri2020self} or predicting the activity of machines \cite{nishida2022anomalous}.
Using an auxiliary task to learn embeddings is also called \ac{oe} \cite{hendrycks2019deep} because normal samples belonging to other classes than a target class can be considered as proxy outliers \cite{primus2020anomalous}.
Often an angular margin loss such as SphereFace \cite{liu2017sphereface}, CosFace \cite{wang2018cosface} or ArcFace \cite{deng2019arcface} is utilized for training an \ac{oe} model.
Systems based on embeddings pre-trained on very large datasets \cite{grollmisch2020iaeo3,wilkinghoff2020anomalous,mueller2021acoustic} can be used, too.
However, it has been shown that directly training a system on the data yields better \ac{asd} results, even when only very limited training data is available \cite{wilkinghoff2023using}.
In addition, different strategies can be combined by using an ensemble of multiple models \cite{lopez2021ensemble,kuroyanagi2021ensemble,dang2022ensemble}.
\par
Different strategies to train an \ac{asd} system have different strengths and weaknesses.
Using an auxiliary task for training relies on additional meta-information to generate labels for a classification task whereas \ac{im}-based models do not need any labels.
Furthermore, autoencoders can localize anomalies in the input space by visualizing an element-wise reconstruction error as done in \cite{kapka2020id,suefusa2020anomalous}.
However, training ASD models by using an auxiliary task usually enhances their performance \cite{fernandez2021using}.
Even for \ac{im}-based models, performance can be significantly improved when utilizing meta information such as machine types.
In \cite{kapka2020id} a class-conditioned autoencoder is used, in \cite{kuroyanagi2021ensemble} not only spectral features but also the machine ID  is encoded and decoded, and in \cite{dohi2021flow} a normalizing flow is trained to assign lower likelihood to sounds of other machines and a higher likelihood to sounds of the target machine.
 As suspected in \cite{nishida2022anomalous,wilkinghoff2023design}, the most likely reason for the difference in performance is that, as stated before, recordings for machine condition monitoring are very noisy because of factory background noise.
This is a problem for \ac{im}-based models because they cannot tell the difference between arbitrary sound events not emitted by a monitored machine and normal or anomalous sounds emitted by the machine.
Both are considered equally important by the model.
Moreover, anomalies present in these noisy audio recordings are usually very subtle when being compared to the noise or other sound events present in a recording making it even more difficult to detect potential anomalies.
When being trained with an auxiliary task, a model learns to ignore noise, which can be assumed to be similar for all considered classes, and therefore to isolate the target machine sound by ignoring the uninformative background sound events.
As a result, these models are more sensitive to changes of the machine sounds and have better anomaly detection capabilities.
\par
Localizing and visualizing frequencies or temporal regions of recordings that are being considered anomalous is important for practical applications because users can better understand the decisions of the \ac{asd} system (\ac{xai} \cite{holzinger2020xai}).
Furthermore, this may help to find the cause of mechanical failure and thus can simplify the maintenance process.
As stated before, autoencoders can easily localize anomalies by using an element-wise reconstruction error.
Additional investigations on visualizing and explaining \ac{asd} decisions include showing that decisions of \ac{asd} systems for machine condition monitoring largely rely on high-frequency information \cite{mai2022explaining}.
This has been visualized using \ac{lime} \cite{ribeiro2016why} applied to sounds (\acs{slime})\cite{mishra2017local}.
Furthermore, \ac{umap} \cite{mcinnes2018umap} has been used to visualize representations of the data such as stacked consecutive frames of log magnitude spectrograms, log-mel magnitude spectrograms, or openL3 embeddings \cite{fernandez2021using}.
\par
The goal of this work is to explain why angular margin losses work well for anomalous sound detection.
To achieve this goal, the following contributions are made:
First and foremost, it is theoretically proven that, after normalizing the embedding space, training an \ac{asd} model by minimizing an angular margin loss using an auxiliary task can be considered as minimizing a regularized one-class loss while being less affected by noise or non-target sound events present in the data.
Moreover, it is experimentally verified that using an angular margin loss for training a model to discriminate between classes of an auxiliary task also leads to better \ac{asd} performance and thus is a better choice for an \ac{asd} task than minimizing a one-class loss such as an \ac{ic} compactness loss with a single or multiple classes.
Last but not least, a procedure for visualizing normal and anomalous regions of the input representations based on \ac{rise} is presented.
Using these visualizations, it is shown that normal and anomalous sounds cannot be distinguished from the highly complex background noise when training with a one-class loss.
In contrast, when using an auxiliary task with multiple classes the model learns to ignore noise and isolate the targeted machine sound for monitoring their condition.
\par
The paper is structured as follows:
In Section \ref{sec:losses}, various one-class losses and angular margin losses are reviewed.
Section \ref{sec:theory} presents our main theoretical results about the relation between these loss functions.
Section \ref{sec:exp} contains a description of the experimental setup and all experimental evaluations consisting of performance evaluations, a comparison between losses during training, visualizing normal and anomalous regions of input representations as perceived by the system and visualizing the resulting embedding spaces.
Section \ref{sec:conclusions} consists of the conclusions of this work.
%Last but not least, the proofs of the results presented in \ref{sec:theory} can be found in the \hyperref[sec:appendix]{appendix}.

\section{Loss Functions}
\label{sec:losses}
In this section, a unified presentation and discussion of several loss functions that are needed for presenting one of the main results of this work in Sec. \ref{sec:theory} will be given.
The following notation will be used throughout the paper:
$X$ denotes the space of input data samples, $N\in\mathbb{N}$ the number of classes defined for an auxiliary task and $D\in\mathbb{N}$ the dimension of the embedding space.

\subsection{One-Class Classification Losses}
\label{sec:comp}
When training a model for \ac{asd} while only having access to normal data i.e. a single class, this is referred to as \emph{one-class classification} and is some form of \ac{im}.
The compactness loss \cite{ruff2018deep}, whose goal it is to project the data into a hypersphere of minimum volume, will serve as a representative of losses for one-class classification and is defined as follows.
\begin{defn}[Compactness loss]
\label{def:comp}
Let $Y\subset X$ be finite.
Let $\mathcal{P}$ denote the power set, $\Phi$ denote the space of network architectures for extracting embeddings and $W(\phi)$ denote the parameter space of $\phi\in\Phi$, i.e. ${\phi:X\times W(\phi)\rightarrow\mathbb{R}^D}$.
Then, the \emph{compactness loss} is defined as
\BE&\mathcal{L}_\text{comp}:\mathcal{P}(X)\times \mathbb{R}^D \times\Phi\times W\rightarrow\mathbb{R}_+\\
&\mathcal{L}_\text{comp}(Y,c,\phi,w):=\frac{1}{\lvert Y\rvert}\sum_{x\in Y}\lVert \phi(x,w)-c\rVert_2^2.\EE
The vector $c\in\mathbb{R}^D$ is called \emph{center}.
\end{defn}
After training,  the (squared) Euclidean distance between the embedding of a given sample and the center can be utilized as an anomaly score: A greater distance indicates a higher likelihood for the sample to be anomalous.
A \emph{trivial solution} for minimizing the compactness loss with center $c\in\mathbb{R}^D$ is a parameter setting $w_c\in W(\phi)$ such that $\phi$ is the constant function ${\phi(x,w_c)=c}$ for all ${x\in X}$.
It is of utmost importance to prevent that the model to be trained is able to learn such a trivial solution.
Otherwise it is impossible to differentiate between normal and anomalous samples.
%Hence, preventing to learn trivial solutions is one of the major difficulties to overcome when training a one-class \ac{asd} model with normal data only.
\par
There are several strategies to prevent a model from learning a trivial solution.
First of all, it needs to be ensured that ${c\neq c_0\in\mathbb{R}^D}$ where $c_0=\phi(x,w_0)$ is defined as the output of the network obtained by setting the weight parameters of model $\phi$ to zero. This is because we have $\phi(x,w_0)=c_0$ for all $x\in X$ as long as the model uses only linear operators, e.g. dense or convolutional layers, and all activation functions have zero as a fixed point, which is the case for most commonly used activation functions.
In \cite{ruff2018deep}, is has been shown that using bias terms, bounded activation functions or a trainable center all enable the model to learn a constant function when using an additive weight decay regularization term and thus must also be avoided.
\par
Another possibility to avoid trivial solutions is to impose additional tasks, so-called \emph{auxiliary tasks}, not directly related to the \ac{asd} problem while training.
Autoencoders \cite{hinton2006reducing}, which are trained to first encode and then decode the input again and have many interesting applications by themselves such as denoising data \cite{vincent2010stacked}, can also be viewed as a way to regularize one-class models.
Here, the encoder is the one-class model mapping the input to an embedding space.
Learning a constant function is not a (trivial) solution for the task because all necessary information for being able to completely reconstruct the input needs to be encoded.
However, noise including other sound sources present in the input audio data needs to be encoded as well because otherwise the input cannot be reconstructed.
Therefore, the noise heavily influences the embeddings and thus the embeddings can also be considered noisy.
Depending on the complexity of the noise, most information contained in the embeddings is only related to the noise and not to the target sound to be analyzed and thus detecting anomalies using an autoencoder may be difficult.
Moreover, in \cite{ruff2018deep} it has been shown that using compactness loss, even for clean datasets, outperforms commonly used autoencoder architectures when detecting anomalies.
\par
A second choice of an auxiliary task to prevent the model from learning a constant function as a trivial solution is a classification task.
Defining multiple classes through an auxiliary task inherently prevents learning a constant function as this would not be a (trivial) solution to the imposed classification problem.
In \cite{perera2019learning}, an additional \emph{descriptiveness loss} is used whose goal is to reduce inter-class similarity between classes of an arbitrary, external multi-class dataset, which is only used to regularize the one-class classification task.
This is done by minimizing the standard \ac{cce} loss for classification on this additional dataset as an auxiliary task.
For each of the two tasks, another version of the same network with identical structure and tied weights is used.
During training, both losses are jointly minimized using a weighted sum ensuring that the so-called reference network associated with the compactness loss does not learn a constant function because this would prevent the secondary network to be able to classify correctly.
\begin{rem}
The original definition of the compactness loss \cite{ruff2018deep} also includes an additional weight decay term.
Such a weight decay term can be used to complement any loss function and does not prevent the model from learning trivial solutions as it is still possible that the model learns to map everything to the center.
Furthermore, all theoretical results presented in this work are valid regardless of whether this specific weight decay term is included or not.
The proof of the main theorem can easily be modified to including the same weight decay term because it is just an additional additive term.
Therefore, we omitted this term in the theoretical investigations of this work for the sake of simplicity while still using it in our experiments.
However, we did not notice any significant effect on the performance.
\end{rem}
For the remainder of this work, we propose to normalize all representations in the embedding space $\mathbb{R}^D$, meaning that ${\lVert c\rVert_2=1=\lVert \phi(x,w)\rVert_2}$ for all $x\in X,w\in W(\phi)$ and centers ${c\in\mathbb{R}^D}$.
This can easily be achieved by dividing the embeddings by their corresponding Euclidean norms.
A normalization of the embedding space essentially reduces the dimension by one as evident by using stereographic projection.
But doing so does not degrade the \ac{asd} performance because the dimension of the embedding space usually is larger than it needs to be.
\par
Normalizing the embedding space has several advantages.
Most importantly, the initialization of the centers is substantially simplified.
In high-dimensional vector spaces i.i.d. random elements are almost surely approximately orthogonal \cite{gorban2016approximation}.
Hence, all class centers can be randomly initialized by sampling from a uniform random distribution as also done in \cite{wilkinghoff2023design} and a careful strategy for initializing the class centers is not needed.
This does not cause any problems e.g. by accidentally using class centers that are very similar to each other in terms of cosine similarity whereas the corresponding acoustic classes are very dissimilar or vice versa.
Moreover, normalizing the centers ensures that all centers are distributed equidistantly and sufficiently far away from zero to avoid learning a trivial solution.
Last but not least, normalizing the embeddings may even prevent numerical issues while training similar to when using batch normalization \cite{ioffe2015batch}.

\subsection{Angular Margin Losses}
We will review the definition of ArcFace \cite{deng2019arcface} as a representative of angular margin losses.
\begin{defn}[ArcFace]
\label{def:arcface}
Let $Y\subset X$ be finite and ${l_j(x)\in\lbrace0,1\rbrace}$ denote the $j$th component of the categorical class label function $l\in L$ where $L$ denotes the space of all functions $l:X\rightarrow\lbrace0,1\rbrace^N$ with $\sum_{j=1}^Nl_j(x)=1$ for all $x\in X$.
Let $\mathcal{P}$ denote the power set, $\Phi$ denote the space of network architectures for extracting embeddings and $W(\phi)$ denote the parameter space of $\phi\in\Phi$, thus $\phi:X\times W(\phi)\rightarrow\mathbb{R}^D$.
Let $\softmax:\mathbb{R}^N\rightarrow[0,1]^N$ denote the softmax function, i.e.
\BE\softmax(x)_i=\frac{\exp(x_i)}{\sum_{j=1}^N\exp(x_j)}.\EE
Then, the \emph{ArcFace} loss is defined as
\BE&\mathcal{L}_\text{ang}:\mathcal{P}(X)\times \mathcal{P}(\mathbb{R}^{D}) \times\Phi\times W\times L\times\mathbb{R}_+\times[0,\frac{\pi}{2}]\rightarrow\mathbb{R}_+\\
&\mathcal{L}_\text{ang}(Y,C,\phi,w,l,s,m)\\:=&-\frac{1}{\lvert Y\rvert}\sum_{x\in Y}\sum_{j=1}^Nl_j(x)\log(\softmax(s\cdot\cos_{\text{mar}}(\phi(x,w),c_j,m)))\EE
where $\lvert C\rvert=N$ and, in this case,
\BE&\softmax(s\cdot\cos_{\text{mar}}(\phi(x,w),c_i,m))\\:=&\frac{\exp(s\cdot\cos_{\text{mar}}(\phi(x,w),c_i,m))}{\sum_{j=1}^N \exp(s\cdot\cos_{\text{mar}}(\phi(x,w),c_j,m\cdot l_j(x))}\EE
with
\BE \cos_{\text{mar}}(x,y,m):=\cos(\arccos(\cos(x,y))+m)\EE 
for cosine similarity
\BE \cos(x,y):=\frac{\langle x,y\rangle}{\lVert x\rVert_2\lVert y\rVert_2}\in[-1,1].\EE
The vectors $c_j\in\mathbb{R}^D$ are called \emph{class centers}, $m\in[0,\frac{\pi}{2}]$ is called \textit{margin} and $s\in\mathbb{R}_+$ is called \textit{scale parameter}.
\end{defn}
\begin{rem}
When using mixup \cite{zhang2017mixup} for data augmentation, the definition of the class label function needs to be generalized to $l:X\rightarrow[0,1]^N$ with $\sum_{j=1}^Nl_j(x)=1$ for all $x\in X$.
In the experimental evaluations of this work, mixup will be used when training a model as this improves the \ac{asd} performance \cite{wilkinghoff2021sub}.
Furthermore, the theoretical results presented in this work still hold when using mixup but in the proofs only binary labels will be used for the sake of simplicity.
\end{rem}

In \cite{zhang2019adacos}, it has been shown that the choice of both hyperparameters, the scale parameter $s$ and the margin $m$, can have a significant impact on the resulting performance.
Strongly varying the magnitude of one of the individual parameters has a similar effect on the sensitivity of the posterior probabilities with respect to the angles as varying the other parameter.
Both a scale parameter that is too large and a margin that is too small lead to very high posterior probabilities for the target class, approximately equal to one, even for relatively large angles.
Therefore, the loss function is insensitive to changing the angle.
A scale parameter that is too small limits the maximum posterior probability of the target class that can be achieved.
Similarly, a margin that is too large also leads to relatively small posterior probabilities.
Thus, in both cases the model still tries to adapt its parameters even when the angles are already small, which hinders convergence.
Due to the similar behavior of both parameters, a single appropriately chosen parameter is sufficient for controlling the posterior probabilities and it has even been shown that an adaptive scale parameter outperforms using two tuned but fixed parameters.
Therefore, we will assume that $s$ is adaptive as specified for the AdaCos loss in \cite{zhang2019adacos} and set $m=0$, i.e. $\cos_\text{mar}(x,y,0)=\cos(x,y)$ for the remainder of this work.
Formally, the definition of the AdaCos loss is the following.

\begin{defn}[AdaCos]
\label{def:adacos}
Using the same notation as in Definition \ref{def:arcface}, let $Y^{(t)}\subset Y$ denote all samples belonging to a mini-batch of size $B\in\mathbb{N}$, i.e. $\lvert Y^{(t)}\rvert = B$.
Let ${\theta_{x,i}:=\arccos(\cos(\phi(x,w),c_i))\in[0,\pi]}$ and the \emph{dynamically adaptive scale parameter} $\tilde{s}^{(t)}\in\mathbb{R}_+$ at training step $t\in\mathbb{N}_0$ be set to
\BE\tilde{s}^{(t)}:=\begin{cases}
	\sqrt{2}\cdot\log(N-1) &\text{if }t=0\\
	\frac{\log B_\text{avg}^{(t)}}{\cos\big(\min(\frac{\pi}{4},\theta_\text{med}^{(t)})\big)}&\text{else}
\end{cases}\EE
where $\theta_\text{med}^{(t)}\in[0,\pi]$ denotes the median of all angles $\theta_{x,i(x)}$ with $x\in X^{(t)}$ and $i(x)\in\lbrace1,...,N\rbrace$ such that $l_i(x)=1$ and
\BE
B_\text{avg}^{(t)}:=\frac{1}{B}\sum_{x\in Y^{(t)}}\sum_{\substack{j=1\\ l_j(x)\neq1}}^N\exp\big(\tilde{s}^{(t-1)}\cdot\cos(\phi(x,w),c_j)\big)
\EE
is the sample-wise average over all summed logits belonging to the non-corresponding classes.
Then, the \emph{AdaCos} loss is defined as
\BE &\mathcal{L}_\text{ada}:\mathcal{P}(X)\times \mathcal{P}(\mathbb{R}^{D}) \times\Phi\times W\times L\rightarrow\mathbb{R}_+\\
&\mathcal{L}_\text{ada}(Y,C,\phi,w,l):=\mathcal{L}_\text{ang}(Y,C,\phi,w,l,\tilde{s},0).\EE
\end{defn}
\begin{rem}
When using mixup \cite{zhang2017mixup} for data augmentation, ${\theta_\text{med}^{(t)}\in[0,\pi]}$ needs to be replaced with the median of the mixed-up angles as specified in \cite{wilkinghoff2021sub}.
\end{rem}
The AdaCos loss can also be extended to using multiple centers for each class, called sub-clusters, instead of a single one.
The idea of using these sub-clusters is to allow the network to learn more complex distributions than a normal distribution for each class enabling the model to have a more differentiated view on the embeddings when using the cosine similarity as an anomaly score.
This has been shown to improve the \ac{asd} performance \cite{wilkinghoff2021sub} and thus helps to differentiate between normal and anomalous samples.
\begin{defn}[Sub-cluster AdaCos]
\label{def:sc_adacos}
Using the same notation as in Definitions \ref{def:arcface} and \ref{def:adacos}, let $C_j\in\mathcal{P}(\mathbb{R}^{D})$ with $\lvert C_j\rvert=M$ denote all centers belonging to class $j\in\lbrace1,...,N\rbrace$.
Let the \emph{dynamically adaptive scale parameter} $\hat{s}^{(t)}\in\mathbb{R}_+$ at training step $t\in\mathbb{N}_0$ be set to
\BE\hat{s}^{(t)}:=\begin{cases}
	\sqrt{2}\cdot\log(N\cdot M-1) &\text{if }t=0\\
	\frac{f_\text{max}^{(t)}+\log \hat{B}_\text{avg}^{(t)}}{\cos\big(\min(\frac{\pi}{4},\theta_\text{med}^{(t)})\big)}&\text{else}
\end{cases}\EE
with
\BE
\hat{B}_\text{avg}^{(t)}:=\frac{1}{B}\sum_{x\in Y^{(t)}}\sum_{j=1}^N\sum_{c\in C_j}\exp\big(\hat{s}^{(t-1)}\cos(\phi(x,w),c)-f_\text{max}^{(t)}\big)
\EE
and
\BE {f_\text{max}^{(t)}:=\max_{x\in Y^{(t)}}\max_{j=1}^N\max_{c\in C_j}\hat{s}^{(t-1)}\cdot\cos(\phi(x,w),c)}.\EE
Then, the \emph{sub-cluster AdaCos} loss is defined as
\BE &\mathcal{L}_\text{sc-ada}:\mathcal{P}(X)\times \mathcal{P}(\mathcal{P}(\mathbb{R}^{D})) \times\Phi\times W\times L\rightarrow\mathbb{R}_+\\
&\mathcal{L}_\text{sc-ada}(Y,C,\phi,w,l)\\:=&-\frac{1}{\lvert Y\rvert}\sum_{x\in Y}\sum_{j=1}^Nl_j(x)\log(\softmax(\hat{s}\cdot\cos(\phi(x,w),C_j)))\EE
where $\lvert C\rvert=N$ and, in this case,
\BE&\softmax(\hat{s}\cdot\cos(\phi(x,w),C_j))\\:=&\sum_{c_j\in C_j}\frac{\exp(\hat{s}\cdot\cos(\phi(x,w),c_j))}{\sum_{k=1}^{N}\sum_{c_k\in C_k} \exp(\hat{s}\cdot\cos(\phi(x,w),c_k))}\EE
\end{defn}
\begin{rem}
As shown in \cite{wilkinghoff2021sub}, for the sub-cluster AdaCos loss as defined above mixup \cite{zhang2017mixup} needs to be used. Otherwise, the dynamically adaptive scale parameter $\hat{s}^{(t)}$ grows exponentially.
\end{rem}
    For the compactness loss, there is no benefit of using sub-clusters.
    The reason is that an optimal solution of this sub-cluster compactness loss would correspond to the mean of the sub-clusters or, in case all embeddings are normalized, to its projection onto the unit sphere. 
    Hence, there would be a single global optimum and this sub-cluster compactness loss would behave as if only a single sub-cluster is used.
    For the sub-cluster AdaCos loss, the situation is completely different because the softmax function is applied to all individual sub-clusters and the sum over the resulting scores is taken.
    This makes the resulting softmax probability, and thus also the loss function, symmetric with respect to the corresponding sub-clusters of an individual class.
    Therefore, the loss is invariant to changing the position of an embedding on the hypersphere as long as the sum of the distances to the sub-clusters is the same.
    Hence, also the space of optimal solutions grows with respect to the number of sub-clusters.
    However, due to the dependence on the sub-clusters of the other classes caused by the softmax function, this invariance is a simplification and the real situation is more complex.

\section{Relation between One-Class Losses and Angular Margin Losses}
\label{sec:theory}
For the proof of the main theoretical result of this work, the following basic identity is needed.
\begin{lem}
\label{lem:cos_id}
    For $x,y\in\mathbb{R}^D$ with $\lVert x\rVert_2=\lVert y\rVert_2=1$, it holds that
    \BE\cos(x,y)=1-\frac{\lVert x-y\rVert_2^2}{2}.\EE
\end{lem}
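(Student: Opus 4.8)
The plan is to prove this by directly expanding the squared Euclidean norm using bilinearity of the inner product and then invoking the unit-norm hypothesis twice. Concretely, I would start from the right-hand side and write
\BE
\lVert x-y\rVert_2^2 = \langle x-y,\,x-y\rangle = \langle x,x\rangle - 2\langle x,y\rangle + \langle y,y\rangle = \lVert x\rVert_2^2 - 2\langle x,y\rangle + \lVert y\rVert_2^2.
\EE
Using $\lVert x\rVert_2 = \lVert y\rVert_2 = 1$, the first and last terms each equal $1$, so this simplifies to $\lVert x-y\rVert_2^2 = 2 - 2\langle x,y\rangle$.

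Next I would use the unit-norm hypothesis a second time, now in the definition of cosine similarity from the excerpt: since $\lVert x\rVert_2 \lVert y\rVert_2 = 1$, we have $\cos(x,y) = \langle x,y\rangle / (\lVert x\rVert_2\lVert y\rVert_2) = \langle x,y\rangle$. Substituting this into the previous display gives $\lVert x-y\rVert_2^2 = 2 - 2\cos(x,y)$, and rearranging yields $\cos(x,y) = 1 - \tfrac{1}{2}\lVert x-y\rVert_2^2$, which is the claim.

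There is no real obstacle here; the identity is elementary and follows from a two-line computation. The only point worth stating explicitly in the write-up is that the unit-norm assumption is used in two distinct places — once to collapse $\lVert x\rVert_2^2 + \lVert y\rVert_2^2$ to $2$, and once to identify the cosine similarity with the plain inner product $\langle x,y\rangle$ — since both simplifications are what make the clean form of the identity hold. I would keep the proof to these few lines and not belabor it, as its sole purpose is to serve as a lemma feeding into the main theorem relating angular margin losses to the compactness loss.
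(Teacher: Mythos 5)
Your proposal is correct and follows essentially the same route as the paper's proof in the appendix: expand $\lVert x-y\rVert_2^2$ (the paper does so coordinate-wise, you via bilinearity of the inner product, which is the same computation), use the unit-norm hypothesis to collapse the squared norms to $2$, and identify the cosine similarity with the plain inner product. No gaps; nothing further is needed.
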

\begin{proof}\renewcommand{\qedsymbol}{}
    See Appendix.
\end{proof}
\begin{rem}
This lemma also shows that for normalized embeddings using Euclidean distance and using cosine distance, which in this case is equal to the standard scalar product, are equivalent for computing an anomaly score.
\end{rem}
Now, the theorem itself follows.
\begin{thm}
\label{thm:ang_comp}
Let $Y_j:=\lbrace x\in Y:l_j(x)=1\rbrace$.
Then minimizing $\mathcal{L}_\text{sc-ada}(Y,C,\phi,w,l)$ with gradient descent minimizes all \ac{ic} compactness losses with weighted gradients given by
\BE &\frac{\hat{s}}{2}\sum_{i=1}^N\frac{1}{\lvert Y_i\rvert}\sum_{x\in Y_i}\sum_{c_i\in C_i}P(\tau(\phi(x,w))=c_i\vert \tau(\phi(x,w))\in C_i)\\\cdot&\frac{\partial}{\partial w}\lVert \phi(x,w)-c_i\rVert_2^2\EE
while maximizing all inter-class compactness losses with weighted gradients given by
\BE &-\frac{\hat{s}}{2}\sum_{i=1}^N\frac{1}{\lvert Y_i\rvert}\sum_{x\in Y_i}\sum_{k=1}^N\sum_{c_k\in C_k}P(\tau(\phi(x,w))=c_k)\\\cdot&\frac{\partial}{\partial w}\lVert \phi(x,w)-c_k\rVert_2^2\EE
where
\BE &P(\tau(\phi(x,w))=c_i\vert \tau(\phi(x,w))\in C_i)\\:=&\frac{\exp(\hat{s}\cdot\cos(\phi(x,w),c_i))}{\sum_{c'_i\in C_i}\exp(\hat{s}\cdot\cos(\phi(x,w),c'_i))}\EE
and
\BE &P(\tau(\phi(x,w))=c_k)\\:=&\frac{\exp(\hat{s}\cdot\cos(\phi(x,w),c_k))}{\sum_{k=1}^N\sum_{c'_k\in C_k}\exp(\hat{s}\cdot\cos(\phi(x,w),c'_k))}\EE
with a cluster assignment function $\tau:\mathbb{R}^D\rightarrow\mathbb{R}^D$ given by
\BE \tau(z,C)=\argmax_{c\in C}\cos(z,c).\EE
\end{thm}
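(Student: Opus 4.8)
The plan is to compute $\tfrac{\partial}{\partial w}\mathcal{L}_\text{sc-ada}(Y,C,\phi,w,l)$ directly and read off the two claimed weighted gradients. First I would use that $l$ is one-hot-valued: for each $x\in Y$ there is exactly one class $i(x)$ with $l_{i(x)}(x)=1$, so the loss reduces to $-\tfrac{1}{\lvert Y\rvert}\sum_{i=1}^N\sum_{x\in Y_i}\log p_i(x)$, where $p_i(x):=\softmax(\hat s\cdot\cos(\phi(x,w),C_i))$ denotes the aggregated sub-cluster softmax probability of the correct class. Abbreviating the partition function by $Z(x):=\sum_{k=1}^N\sum_{c_k\in C_k}\exp(\hat s\cdot\cos(\phi(x,w),c_k))$, I would split $\log p_i(x)=\log\bigl(\sum_{c_i\in C_i}\exp(\hat s\cdot\cos(\phi(x,w),c_i))\bigr)-\log Z(x)$ into a within-class log-sum-exp term over $C_i$ and the global log-partition term.

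Next I would differentiate both terms with respect to $w$, treating the adaptive scale $\hat s$ as detached (a stop-gradient), exactly as is done in the derivation of AdaCos, so that no extra term from the batch statistics appears. The logarithmic-softmax-gradient identity then gives that the within-class term contributes $\hat s$ times a convex combination over $c_i\in C_i$ of $\tfrac{\partial}{\partial w}\cos(\phi(x,w),c_i)$ with weights $\tfrac{\exp(\hat s\cos(\phi(x,w),c_i))}{\sum_{c_i'\in C_i}\exp(\hat s\cos(\phi(x,w),c_i'))}=P(\tau(\phi(x,w))=c_i\mid\tau(\phi(x,w))\in C_i)$, while $-\log Z(x)$ contributes $-\hat s$ times a convex combination over all centers $c_k$ of $\tfrac{\partial}{\partial w}\cos(\phi(x,w),c_k)$ with weights $\tfrac{\exp(\hat s\cos(\phi(x,w),c_k))}{Z(x)}=P(\tau(\phi(x,w))=c_k)$.

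Then I would invoke Lemma \ref{lem:cos_id}: since the embedding space is normalized, $\cos(\phi(x,w),c)=1-\tfrac12\lVert\phi(x,w)-c\rVert_2^2$ for every center $c$, so that $\tfrac{\partial}{\partial w}\cos(\phi(x,w),c)=-\tfrac12\tfrac{\partial}{\partial w}\lVert\phi(x,w)-c\rVert_2^2$. Substituting this and carefully composing the overall minus sign from $-\log p_i$ with the $-\tfrac12$ from the Lemma, the within-class part turns into $+\tfrac{\hat s}{2}$ times a nonnegatively weighted sum of the gradients $\tfrac{\partial}{\partial w}\lVert\phi(x,w)-c_i\rVert_2^2$ over the correct class, i.e. a weighted combination of the intra-class compactness gradients, which is the first displayed expression; the global part turns into $-\tfrac{\hat s}{2}$ times a nonnegatively weighted sum of $\tfrac{\partial}{\partial w}\lVert\phi(x,w)-c_k\rVert_2^2$ over all $N\cdot M$ centers, i.e. the second displayed expression, whose minus sign means a gradient-descent step increases those distances. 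A literal computation produces the prefactor $1/\lvert Y\rvert$ rather than $1/\lvert Y_i\rvert$ in both sums; the discrepancy is the positive class weight $\lvert Y_i\rvert/\lvert Y\rvert$, which is what ``weighted'' absorbs. Putting the pieces together shows that one gradient-descent step on $\mathcal{L}_\text{sc-ada}$ simultaneously descends every intra-class compactness loss and ascends every inter-class one.

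The step I expect to be the main obstacle is the bookkeeping specific to the sub-cluster setting: because the numerator of the aggregated softmax is itself a sum over $C_i$, its logarithmic derivative produces the \emph{conditional} assignment probabilities $P(\tau=c_i\mid\tau\in C_i)$, whereas the denominator produces the \emph{unconditional} ones $P(\tau=c_k)$ over all $N\cdot M$ centers, and these must not be conflated; keeping this distinction straight while also correctly composing the two sign flips (from $-\log$ and from Lemma \ref{lem:cos_id}) is the delicate part, and everything else is routine differentiation. One should also remark, as for AdaCos, that $\hat s$ formally depends on $w$ through the batch statistics but is detached before differentiating, so it contributes no additional term.
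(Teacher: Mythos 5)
Your proposal is correct and follows essentially the same route as the paper: fix a sample with its one-hot class, split the aggregated log-softmax into the within-class log-sum-exp minus the global log-partition term, differentiate each (with $\hat{s}$ held fixed), and apply Lemma~\ref{lem:cos_id} to convert cosine gradients into squared-distance gradients, which yields exactly the conditional weights $P(\tau=c_i\mid\tau\in C_i)$ from the numerator and the unconditional weights $P(\tau=c_k)$ from the denominator before summing over $Y$ and negating. Your explicit remark about the $1/\lvert Y\rvert$ versus $1/\lvert Y_i\rvert$ normalization is a fair observation about the bookkeeping that the paper's proof glosses over, but it does not constitute a different argument.
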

\begin{proof}
Let $x\in Y$, $\phi\in\Phi$ and $\hat{s}\in\mathbb{R}_+$ be fixed and $i\in\lbrace1,...,N\rbrace$ such that ${l_i(x)=1}$ and $l_j(x)=0$ for $j\neq i$.
To simplify notation, define ${e(w,c):=\exp(\hat{s}\cdot\cos(\phi(x,w),c))}$.
Using Lemma \ref{lem:cos_id}, we see that
\BEs&\frac{\partial}{\partial w}\log\bigg(\sum_{c_i\in C_i}e(w,c_i)\bigg)\\
=&\frac{\sum_{c_i\in C_i}e(w,c_i)\cdot \hat{s}\cdot\frac{\partial}{\partial w}\cos(\phi(x,w),c_i))}{\sum_{c'_i\in C_i}e(w,c'_i)}\\
=&-\frac{\hat{s}}{2}\sum_{c_i\in C_i}\frac{e(w,c_i)\cdot\frac{\partial}{\partial w}\lVert \phi(x,w)-c_i\rVert_2^2}{\sum_{c'_i\in C_i}e(w,c'_i)}\EEs
and similarly
\BEs&\frac{\partial}{\partial w}\log\bigg(\sum_{k=1}^N\sum_{c_k\in C_k}e(w,c_k)\bigg)\\
=&\frac{\sum_{k=1}^N\sum_{c_k\in C_k}e(w,c_k)\cdot \hat{s}\cdot\frac{\partial}{\partial w}\cos(\phi(x,w),c_k))}{\sum_{k=1}^N\sum_{c'_k\in C_k}e(w,c'_k)}\\
=&-\frac{\hat{s}}{2}\sum_{k=1}^N\sum_{c_k\in C_k}\frac{e(w,c_k)\cdot\frac{\partial}{\partial w}\lVert \phi(x,w)-c_k\rVert_2^2}{\sum_{k=1}^N\sum_{c'_k\in C_k}e(w,c'_k)}\\
=&-\frac{\hat{s}}{2}\sum_{c_i\in C_i}\frac{e(w,c_i)\cdot\frac{\partial}{\partial w}\lVert \phi(x,w)-c_i\rVert_2^2}{\sum_{k=1}^N\sum_{c'_k\in C_k}e(w,c'_k)}\\
&-\frac{\hat{s}}{2}\sum_{\substack{k=1\\k\neq i}}^N\sum_{c_k\in C_k}\frac{e(w,c_k)\cdot\frac{\partial}{\partial w}\lVert \phi(x,w)-c_k\rVert_2^2}{\sum_{k=1}^N\sum_{c'_k\in C_k}e(w,c'_k)}.\EEs
Using both identities, we obtain
\BEs
&\frac{\partial}{\partial w}\sum_{j=1}^Nl_j(x)\log(\softmax(\hat{s}\cdot\cos(\phi(x,w),C_j)))\\
=&\frac{\partial}{\partial w}\log\bigg(\sum_{c_i\in C_i}\frac{e(w,c_i)}{\sum_{k=1}^N\sum_{c_k\in C_k}e(w,c_k)}\bigg)\\
=&\frac{\partial}{\partial w}\log\bigg(\sum_{c_i\in C_i}e(w,c_i)\bigg)-\frac{\partial}{\partial w}\log\bigg(\sum_{k=1}^N\sum_{c_k\in C_k}e(w,c_k)\bigg)\\
=&-\frac{\hat{s}}{2}\sum_{c_i\in C_i}\frac{e(w,c_i)\cdot\frac{\partial}{\partial w}\lVert \phi(x,w)-c_i\rVert_2^2}{\sum_{c'_i\in C_i}e(w,c'_i)}\\
&+\frac{\hat{s}}{2}\sum_{c_i\in C_i}\frac{e(w,c_i)\cdot\frac{\partial}{\partial w}\lVert \phi(x,w)-c_i\rVert_2^2}{\sum_{k=1}^N\sum_{c'_k\in C_k}e(w,c'_k)}\\
&+\frac{\hat{s}}{2}\sum_{\substack{k=1\\k\neq i}}^N\sum_{c_k\in C_k}\frac{e(w,c_k)\cdot\frac{\partial}{\partial w}\lVert \phi(x,w)-c_k\rVert_2^2}{\sum_{k=1}^N\sum_{c'_k\in C_k}e(w,c'_k)}\\
=&-\frac{\hat{s}}{2}\bigg(\sum_{c_i\in C_i}e(w,c_i)\cdot\frac{\partial}{\partial w}\lVert \phi(x,w)-c_i\rVert_2^2\\
&\cdot\bigg(\frac{1}{\sum_{c'_i\in C_i}e(w,c'_i)}-\frac{1}{\sum_{k=1}^N\sum_{c'_k\in C_k}e(w,c'_k)}\bigg)\\
&-\sum_{\substack{k=1\\k\neq i}}^N\sum_{c_k\in C_k}\frac{e(w,c_k)\cdot\frac{\partial}{\partial w}\lVert \phi(x,w)-c_k\rVert_2^2}{\sum_{k=1}^N\sum_{c'_k\in C_k}e(w,c'_k)}\bigg)\\
=&-\frac{\hat{s}}{2}\bigg(\sum_{c_i\in C_i}e(w,c_i)\cdot\frac{\partial}{\partial w}\lVert \phi(x,w)-c_i\rVert_2^2\\
&\cdot\bigg(\sum_{\substack{k=1\\k\neq i}}^N\sum_{c_k\in C_k}\frac{e(w,c_k)}{(\sum_{c'_i\in C_i}e(w,c'_i))(\sum_{k=1}^N\sum_{c'_k\in C_k}e(w,c'_k))}\bigg)\\
&-\sum_{\substack{k=1\\k\neq i}}^N\sum_{c_k\in C_k}\frac{e(w,c_k)\cdot\frac{\partial}{\partial w}\lVert \phi(x,w)-c_k\rVert_2^2}{\sum_{k=1}^N\sum_{c'_k\in C_k}e(w,c'_k)}\bigg)\\
=&-\frac{\hat{s}}{2}\sum_{\substack{k=1\\k\neq i}}^N\sum_{c_k\in C_k}\frac{e(w,c_k)}{\sum_{k=1}^N\sum_{c'_k\in C_k}e(w,c'_k)}\\
&\cdot\bigg(\sum_{c_i\in C_i}\frac{e(w,c_i)}{\sum_{c'_i\in C_i}e(w,c'_i)}\cdot\frac{\partial}{\partial w}\lVert \phi(x,w)-c_i\rVert_2^2\\&-\frac{\partial}{\partial w}\lVert \phi(x,w)-c_k\rVert_2^2\bigg)\\
=&-\frac{\hat{s}}{2}\sum_{k=1}^N\sum_{c_k\in C_k}\underbrace{\frac{e(w,c_k)}{\sum_{k=1}^N\sum_{c'_k\in C_k}e(w,c'_k)}}_{=P(\tau(\phi(x,w))=c_k)}\\&\cdot\sum_{c_i\in C_i}\underbrace{\frac{e(w,c_i)}{\sum_{c'_i\in C_i}e(w,c'_i)}}_{=P(\tau(\phi(x,w))=c_i\vert \tau(\phi(x,w))\in C_i)}\\
&\cdot\bigg(\frac{\partial}{\partial w}\lVert \phi(x,w)-c_i\rVert_2^2-\frac{\partial}{\partial w}\lVert \phi(x,w)-c_k\rVert_2^2\bigg)\EEs
where we used that
\BEs&\frac{1}{\sum_{c'_i\in C_i}e(w,c'_i)}-\frac{1}{\sum_{k=1}^N\sum_{c'_k\in C_k}e(w,c'_k)}\\
=&\frac{\sum_{k=1}^N\sum_{c_k\in C_k}e(w,c_k)-\sum_{c_i\in C_i}e(w,c_i)}{(\sum_{c'_i\in C_i}e(w,c'_i))(\sum_{k=1}^N\sum_{c'_k\in C_k}e(w,c'_k))}\\
=&\sum_{\substack{k=1\\k\neq i}}^N\sum_{c_k\in C_k}\frac{e(w,c_k)}{(\sum_{c'_i\in C_i}e(w,c'_i))(\sum_{k=1}^N\sum_{c'_k\in C_k}e(w,c'_k))}.\EEs
Now, summing over all samples $x\in Y$, normalizing with $\lvert Y\rvert$ and taking the additive inverse yields the desired result.
\par
When using mixup, the right hand side of the last equation needs to be replaced with a weighted sum of two terms, each corresponding to one of the two classes that are mixed-up, because there are ${i_1,i_2\in\lbrace1,...,N\rbrace}$ such that ${l_{i_1}(x)\neq0\neq l_{i_2}(x)}$.
Otherwise, the proof is exactly the same.
In conclusion, the proven result still holds for mixed-up samples but includes two similar terms instead of one term.
\end{proof}
\begin{cor}
Minimizing $\mathcal{L}_\text{ada}(Y,C,\phi,w,l)$ with gradient descent is equivalent to minimizing 
\BE &-\frac{\tilde{s}}{2}\sum_{\substack{k=1}}^N\softmax(\hat{s}\cdot\cos(\phi(x,w),c_k))\\
\cdot&\bigg(\frac{\partial}{\partial w}\lVert \phi(x,w)-c_i\rVert_2^2-\frac{\partial}{\partial w}\lVert \phi(x,w)-c_k\rVert_2^2\bigg).\EE
\end{cor}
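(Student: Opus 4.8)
The plan is to obtain this statement as the single-center specialization of Theorem~\ref{thm:ang_comp}. The key observation is that $\mathcal{L}_\text{ada}$ is exactly $\mathcal{L}_\text{sc-ada}$ in the degenerate case $M=1$, i.e.\ with each class assigned a single center, $C_j=\lbrace c_j\rbrace$ for all $j\in\lbrace1,\dots,N\rbrace$ (provided one also checks that the sub-cluster adaptive scale $\hat{s}$ then reduces to the AdaCos scale $\tilde{s}$; see below). Hence everything should drop out of the weighted-gradient formula already proven for $\mathcal{L}_\text{sc-ada}$ once the per-class sub-cluster structure is trivialized, with no new analysis needed.

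Concretely I would proceed as follows. First, set $C_j=\lbrace c_j\rbrace$ for every $j$ throughout the proof of Theorem~\ref{thm:ang_comp}. Then the conditional cluster-assignment probability degenerates to $P(\tau(\phi(x,w))=c_i\mid\tau(\phi(x,w))\in C_i)=1$, since its denominator $\sum_{c'_i\in C_i}\exp(\hat{s}\cdot\cos(\phi(x,w),c'_i))$ now has a single summand, so the inner sum over $c_i\in C_i$ collapses to the unique index $i$ with $l_i(x)=1$; likewise $P(\tau(\phi(x,w))=c_k)$ reduces to $\softmax(\hat{s}\cdot\cos(\phi(x,w),c_k))$, its denominator becoming a plain sum over the $N$ class centers. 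Substituting these two simplifications into the last line of the chain of equalities in the proof of Theorem~\ref{thm:ang_comp}, evaluated for a single sample $x$ of class $i$ (before the averaging over $Y$), then yields the displayed per-sample expression, with minimization understood in the same sense as in the theorem. The only analytic content — the chain-rule identities relating $\frac{\partial}{\partial w}\cos(\phi(x,w),c)$ to $\frac{\partial}{\partial w}\lVert\phi(x,w)-c\rVert_2^2$ via Lemma~\ref{lem:cos_id} — is already carried out there and does not need to be redone.

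The only part that is a genuine, if minor, verification rather than a pure substitution is the claim that $\hat{s}^{(t)}$ of Definition~\ref{def:sc_adacos} collapses to $\tilde{s}^{(t)}$ of Definition~\ref{def:adacos} when $M=1$: at $t=0$ this is immediate from $\sqrt{2}\cdot\log(N\cdot1-1)=\sqrt{2}\cdot\log(N-1)$, and for $t>0$ one unfolds the log-sum-exp stabilization to get $f_\text{max}^{(t)}+\log\hat{B}_\text{avg}^{(t)}=\log\big(\frac{1}{B}\sum_{x\in Y^{(t)}}\sum_{j=1}^N\sum_{c\in C_j}\exp(\hat{s}^{(t-1)}\cos(\phi(x,w),c))\big)$, which matches $\log B_\text{avg}^{(t)}$ up to the convention of whether the target-class logit is included in the normalizing sum. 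This is the one place I would expect to spend any effort; alternatively one can bypass it entirely by noting that the gradient identity at the heart of Theorem~\ref{thm:ang_comp} holds for an arbitrary fixed scale, so the corollary is valid for whichever value $\tilde{s}$ is plugged into $\mathcal{L}_\text{ada}$, and the leading sign together with the $\frac{1}{\lvert Y\rvert}$ averaging are then purely a matter of matching the conventions used in the two statements. Finally, the two-term modification for mixed-up samples transfers from the theorem unchanged.
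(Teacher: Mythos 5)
Your proposal is correct and follows essentially the same route as the paper: the paper's own proof is a one-line remark that the argument of Theorem~\ref{thm:ang_comp} does not depend on the exact form of the adaptive scale parameter and therefore specializes to AdaCos by taking a single sub-cluster per class, which is precisely your $M=1$ collapse of the conditional probability to $1$ and of $P(\tau(\phi(x,w))=c_k)$ to the plain softmax. Your additional verification that $\hat{s}^{(t)}$ reduces to $\tilde{s}^{(t)}$ is a reasonable extra check, but, as you note yourself, it can be bypassed exactly as the paper does.
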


\begin{proof}
The proof of Theorem \ref{thm:ang_comp} does not depend on the exact structure of the dynamically adaptive scale parameter and thus also holds for the standard AdaCos loss by replacing $\hat{s}$ with $\tilde{s}$ and using only a single sub-cluster for each class.
\end{proof}

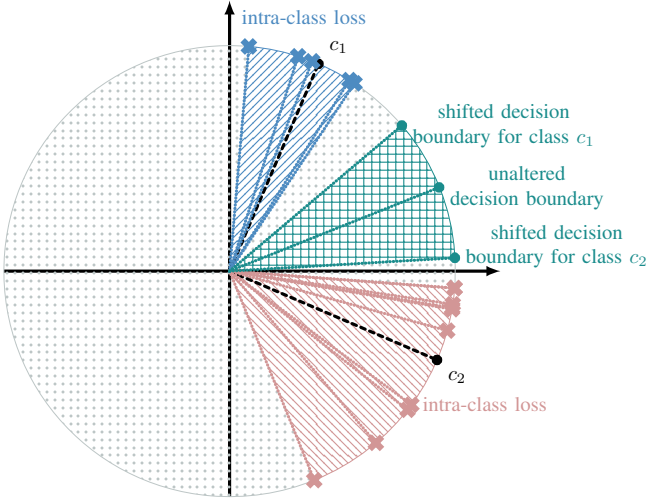
\begin{figure}[t]
    \centering
    \begin{adjustbox}{width=\columnwidth}
          \begin{tikzpicture}[very thick, scale=1, every node/.style={scale=1.3}, line width=2]
  \tkzInit[xmin=-1,xmax=1.1,xstep=.2,ymin=-1,ymax=1.1,ystep=.2]
  \tkzDrawX[label=]
  \tkzDrawY[label=]
  %\tkzLabelX
  %\tkzLabelY

  \tkzDefPoint(0,0){O}
  \tkzDefPoint(0.3939193, 0.91914503){A}
  \tkzDefPoint(0.91914503, -0.3939193){B}
  \tkzDefPoint(0.32989211, 0.94401864){AM}
  \tkzDefPoint(0.80753613, -0.58981811){BM}
  
  \tkzDefPoint(0.55371146, 0.8327086){A1}
  \tkzDefPoint(0.36614142, 0.93055922){A2}
  \tkzDefPoint(0.53128191, 0.8471951){A3}
  \tkzDefPoint(0.30374254, 0.95275415){A4}
  \tkzDefPoint(0.08545945, 0.99634165){A5}

  \tkzDefPoint(0.99715266, -0.07540939){B1}
  \tkzDefPoint(0.37519785, -0.92694475){B2}
  \tkzDefPoint(0.96456098, -0.26386002){B3}
  \tkzDefPoint(0.64817275, -0.76149333){B4}
  \tkzDefPoint(0.99715266, -0.07540939){B5}
  \tkzDefPoint(0.80563177, -0.59241662){B6}
  \tkzDefPoint(0.98561993, -0.16897739){B7}
  \tkzDefPoint(0.989149 , -0.1469158){B8}
  \tkzDefPoint(0.98663952, -0.16291855){B9}
  \tkzDefPoint(0.78857062, -0.61494421){B10}

  \tkzDefPoint(0.92847669, 0.37139067){M}
  \tkzDefPoint(0.99827437, 0.05872202){MB}
  \tkzDefPoint(0.76338629, 0.64594224){MA}

  \begin{scope}
    \tkzClipCircle(O,A)
    %\tkzGrid[color=gray!20](-1,-1)(1,1)
  \end{scope}

  \tkzDrawArc[color=teal!15!gray!45](O,A5)(B2)
  \tkzDrawArc[color=teal!15!gray!45](O,MA)(A1)
  \tkzDrawArc[color=teal!15!gray!45](O,B5)(MB)
  \tkzDrawArc[latex-latex,color=teal!90](O,MB)(MA)
  \tkzDrawArc[latex reversed-latex reversed,color=teal!70!blue!70](O,A1)(A5)
  \tkzDrawArc[latex reversed-latex reversed,color=teal!35!red!50](O,B2)(B5)
  \tkzFillAngle[fill=blue,pattern=north east lines,pattern color=teal!70!blue!70,size=5cm](A1,O,A5)
  \tkzFillAngle[fill=blue,pattern=north west lines,pattern color=teal!35!red!50,size=5cm](B2,O,B5)
  %\tkzFillAngle[fill=blue,pattern=vertical lines,pattern color=teal!15!yellow!45,size=5cm](B5,O,A1)
  \tkzFillAngle[fill=blue,pattern=dots,pattern color=teal!15!gray!45,size=5cm](A5,O,B2)
  \tkzFillAngle[fill=blue,pattern=dots,pattern color=teal!15!gray!45,size=5cm](MA,O,A1)
  \tkzFillAngle[fill=blue,pattern=dots,pattern color=teal!15!gray!45,size=5cm](B5,O,MB)
  \tkzFillAngle[fill=blue,pattern=grid,pattern color=teal!90,size=5cm](MB,O,MA)
  \tkzDrawSegments[dashed,line width=2](O,A)
  \tkzDrawSegments[dashed,line width=2](O,B)
  \tkzDrawSegments[dotted,color=teal!35!red!50,line width=2](O,B1 O,B2 O,B3 O,B4 O,B5 O,B6 O,B7 O,B8 O,B9 O,B10)
  \tkzDrawSegments[dotted,color=teal!70!blue!70,line width=2](O,A1 O,A2 O,A3 O,A4 O,A5)
  \tkzDrawPoints[line width=4](A,B)
  \tkzDrawPoints[shape=cross out,color=teal!70!blue!70,line width=4](A1,A2,A3,A4,A5)
  \tkzDrawPoints[shape=cross out,color=teal!35!red!50,line width=4](B1,B2,B3,B4,B5,B6,B7,B8,B9,B10)

  \tkzLabelPoint[above right](A){$c_1$}
  \tkzLabelPoint[below right](B){$c_2$}
  \tkzDrawPoints[line width=4,color=teal!90](MB)
  \tkzDrawSegments[dotted,line width=2,color=teal!90](O,MB)
  \tkzLabelPoint[right,align=center,color=teal!90,yshift=0.2cm](MB){shifted decision\\ boundary for class $c_2$}
  \tkzDrawPoints[line width=4,color=teal!90](MA)
  \tkzDrawSegments[dotted,line width=2,color=teal!90](O,MA)
  \tkzLabelPoint[right,align=center,color=teal!90](MA){shifted decision\\ boundary for class $c_1$}
  \tkzDrawPoints[line width=4,color=teal!90](M)
  \tkzDrawSegments[dotted,line width=2,color=teal!90](O,M)
  \tkzLabelPoint[right,align=center,color=teal!90](M){unaltered\\ decision boundary}
  \tkzLabelPoint[above,color=teal!70!blue!70,yshift=0.4cm](AM){{intra-class loss}}
  \tkzLabelPoint[right,color=teal!35!red!50](BM){{intra-class loss}}
\end{tikzpicture}
    \end{adjustbox}
    \caption{Illustration of \ac{ic} compactness losses and the angular margin to be ensured between the classes for ${D=2,N=2,M=1}$.
    Intra-class losses are computed by summing all distances of samples to their corresponding class centers (blue and red areas). Inter-class losses are computed by summing all distances of samples to their corresponding decision boundaries. An unaltered decision boundary is exactly the midpoint between the class centers. When using an angular margin loss, the decision boundaries to the other classes are essentially shifted closer to the class center for which the inter-class loss is computed (see Fig. 1 in \cite{wang2018additive}). This explicitly ensures a margin between the classes, which is depicted by the green area.}
    \label{fig:unit_circle}
\end{figure}

This theorem shows that using an angular margin loss such as the AdaCos loss is essentially the same strategy as proposed in \cite{perera2019learning} and applied to \ac{asd} in \cite{inoue2020detection}, i.e. using a compactness loss for increasing \ac{ic} similarity, as defined in Definition \ref{def:comp}, and a so-called descriptiveness loss to decrease inter-class similarity.
However, there are differences between both approaches.
When minimizing an angular margin loss, inter-class compactness losses are used to decrease inter-class similarity instead of a standard \ac{cce} loss.
Second, when using two loss functions one usually has to tune a weight parameter to create a weighted sum of both loss terms, which is not needed for an angular margin loss and impossible without access to anomalous samples.
Furthermore, the gradients belonging to individual samples are weighted with specific softmax probabilities giving more emphasis the closer the sub-clusters are.
As these weights are non-uniform in general, this explicitly shows why using multiple sub-clusters is not equivalent to using a single sub-cluster given by the projection of the mean of the sub-clusters onto the hypersphere as it is the case for an \ac{ic} compactness loss with multiple sub-clusters.
Last but not least, an angular margin loss explicitly ensures a margin between classes, as illustrated in Fig. \ref{fig:unit_circle}, whereas a combination of compactness losses and a \ac{cce} loss only implicitly does this by increasing intra-class similarity.
Note that, in \cite{perera2019learning}, inter-class similarity is decreased on another dataset using less relevant classes because only a single class is available on the target dataset.
Because of these differences, directly minimizing an angular margin loss leads to a different solution than minimizing a combination of \ac{ic} losses and a descriptiveness loss.
\par
Note that the \ac{ic} compactness loss with multiple classes can also be considered a prototypical loss \cite{snell2017prototypical} or angular prototypical loss \cite{chung2020defence} as used for few-shot classification \cite{wang2021generalizing_few}, which defines settings where only very few training samples, called shots, are available for each class.
The only difference between these prototypical losses and an angular margin loss is that, for prototypical losses, the center vectors are re-calculated as the means of embeddings belonging to corresponding classes by using a so-called support set during training while, for an angular margin loss, the class centers are fixed or adaptable parameters of the network.
Hence, this theorem also shows that angular margin losses are a suitable choice for few-shot classification as shown for open-set sound event classification \cite{wilkinghoff2023using} and few-shot keyword spotting \cite{wilkinghoff2023tacos}.
\par
Choosing a classification task as an auxiliary task prevents learning a constant function as a trivial solution.
The reason is that, for such a classification task, an optimal solution is a classifier that maps each sample to its corresponding class center and thus corresponds to jointly learning multiple trivial solutions, one for each class, instead of only learning a constant function.
As long as each anomalous sample belongs to a well-defined normal class used during training, this optimal solution would yield representations not suitable for detecting anomalies as they would not be distinguishable from representations obtained with normal samples. 
However, obtaining such a perfect classifier is much more difficult than learning a constant mapping for a single class and thus training a single model to classify between multiple classes already prevents trivial solutions as long as the classification problem itself is not trivial e.g. by consisting of only a single class.
Still, in \cite{wilkinghoff2023design} it has been shown that the \ac{asd} performance can be improved by applying the same three strategies as used for the compactness loss \cite{ruff2018deep}, namely 1) not using bias terms, 2) not using bounded activation functions and 3) not using trainable class centers.
The most likely reason is that these strategies prevent the model to learn trivial solutions, leading to less informative embeddings, for individual classes that are easily recognized.

\section{Experimental Results}
\label{sec:exp}
Using one-class losses and angular margin losses for \ac{asd} will now be compared experimentally.
\subsection{Dataset}
\begin{table}[t]
	\centering
	\caption{Structure of the DCASE2022 \ac{asd} dataset %with section-specific parameters $a,n\in\lbrace0,...,100\rbrace$.
 }
\begin{adjustbox}{max width=\columnwidth}
	\begin{tabular}{llrrrr}
		\toprule
        &&\multicolumn{4}{c}{number of recordings (per section)}\\
        subset&split&\multicolumn{2}{c}{source domain}&\multicolumn{2}{c}{target domain}\\
        &&normal&anomalous&normal&anomalous\\
		\midrule
		development&training&$990$&$0$&$10$&$0$\\
		development&test&$50$&$50$&$50$&$50$\\
		evaluation&training&$990$&$0$&$10$&$0$\\
		evaluation&test&$50$&$50$&$50$&$50$\\
		%evaluation&test&$n$&$a$&$100-n$&$100-a$\\
		\bottomrule
	\end{tabular}
\end{adjustbox}
\label{tab:dataset}
\end{table}
For most experiments conducted in this work, the DCASE2022 \ac{asd} dataset \cite{dohi2022description} of the task titled \enquote{Unsupervised Anomalous Sound Detection for Machine Condition Monitoring Applying Domain Generalization Techniques} has been used.
The dataset consists of recordings of machine sounds with background factory noise.
Each recording has a single channel, a length of ten seconds and a sampling rate of $16$ kHz and belongs to one of the seven machine types \enquote{fan}, \enquote{gearbox}, \enquote{bearing}, \enquote{slide rail}, \enquote{valve} from MIMII DG \cite{dohi2022mihiidg} and \enquote{toy car}, \enquote{toy train} from ToyADMOS2 \cite{harada2021toyadmos2}.
For each machine type, there are six different so-called sections each of which is dedicated to a specific type of domain shift.
A domain shift means that the characteristics of a machine sound differ in some way between a source domain with many training samples and a target domain with only few training samples.
These shifts can be caused by physical changes of the machines e.g. caused by replacing parts for maintenance, or changes in the acoustical environment e.g. a different background noise or using different recording devices.
Ideally, the \ac{asd} system is able to reliably detect anomalies despite these domain shifts without the need for adapting the system (domain generalization \cite{wang2021generalizing}).
\par
The dataset is divided into a development and an evaluation split each containing recordings of $21$ sections, three for each machine type.
For each recording, information about the machine type and section are given.
For the training datasets, domain information (\enquote{source} or \enquote{target}) and additional attribute information such as states of machine types or noise conditions are given for each recording.
For the test datasets, no domain information and no additional attribute information are given.
The exact structure of the dataset can be found in Tab \ref{tab:dataset}.
The task of an \ac{asd} system is to reliably detect anomalous samples regardless of whether a sample belongs to a source or target domain, i.e. using a single decision threshold for both domains of a section.
\par
Some of the experiments have also been conducted on the DCASE2023 \ac{asd} dataset \cite{dohi2023description,harada2023first} belonging to the task \enquote{First-Shot Unsupervised Anomalous Sound Detection for Machine Condition Monitoring}.
Similar to the DCASE2022 \ac{asd} dataset, this dataset is also aimed at domain generalization for \ac{asd} with the following differences.
First and foremost, the development and evaluation split of the dataset contain different machine types.
The development set contains the same machine types as the DCASE2022 dataset, namely \enquote{fan}, \enquote{gearbox}, \enquote{bearing}, \enquote{slide rail}, \enquote{valve} from MIMII DG \cite{dohi2022mihiidg} and \enquote{toy car}, \enquote{toy train} from ToyADMOS2 \cite{harada2021toyadmos2}.
The evaluation set contains seven completely different machine types, namely \enquote{toy drone}, \enquote{n-scale toy train}, \enquote{vacuum}, and \enquote{toy tank} from \cite{harada2023toyadmos2+} and \enquote{bandsaw}, \enquote{grinder}, \enquote{shaker} from \cite{dohi2022mihiidg}.
Furthermore, for each machine type there is only a single section.
This lowers the difficulty of the auxiliary classification task and thus makes it more difficult to extract embeddings, which are sensitive to anomalous changes of the target sounds.
\par
For the DCASE \ac{asd} datasets, two performance measures are used to evaluate the performance of individual \ac{asd} systems.
One metric is the \ac{auc}, the other metric is the \ac{pauc} \cite{mcclish1989analyzing}, which is the AUC calculated over a low false positive rate ranging from $0$ to $p$ with $p=0.1$ in this case.
The \ac{pauc} is used as an additional metric because decision thresholds for machine condition monitoring are usually set to a value that gives a low number of false alarms and thus this area of the \ac{roc} curve is of particular interest.
Both are threshold-independent metrics allowing a more objective comparison between different \ac{asd} systems than threshold-dependent metrics \cite{aggarwal2017outlier,ebbers2022threshold}.
\begin{table*}[t]
	\centering
	\caption{\ac{asd} performance obtained with different losses using different auxiliary tasks. Harmonic means of all AUCs and pAUCs over all pre-defined sections of the dataset are depicted in percent. Arithmetic mean and standard deviation of the results over five independent trials are shown. Best results in each column are highlighted with bold letters.}
\begin{adjustbox}{max width=\textwidth}
	\begin{tabular}{llllllll}
		\toprule
        \multicolumn{8}{c}{DCASE2022 development set}\\
        \midrule
		\multirow{2}{*}{loss}&\multirow{2}{*}{classes of auxiliary task (number of classes)}&\multicolumn{2}{c}{source domain}&\multicolumn{2}{c}{target domain}&\multicolumn{2}{c}{both domains}\\
		&&AUC&pAUC&AUC&pAUC&AUC&pAUC\\
		\midrule
		\ac{ic} compactness loss (Def. \ref{def:comp})&none ($1$)&$56.4\pm1.4$&$53.9\pm0.6$&$53.6\pm0.9$&$52.6\pm0.3$&$55.1\pm1.2$&$52.6\pm0.4$\\
		\ac{ic} compactness loss (Def. \ref{def:comp})&machine types ($7$)&$66.5\pm2.9$&$60.6\pm0.6$&$63.6\pm2.2$&$57.1\pm0.9$&$65.0\pm1.7$&$57.8\pm0.6$\\
		\ac{ic} compactness loss (Def. \ref{def:comp})&machine types and sections ($42$)&$77.6\pm1.7$&$70.5\pm0.9$&$75.3\pm0.9$&$63.3\pm0.8$&$76.4\pm0.9$&$63.5\pm0.6$\\
		\ac{ic} compactness loss (Def. \ref{def:comp})&machine types and sections, models trained individually ($1$)&$50.0\pm2.3$&$52.1\pm0.6$&$51.7\pm1.8$&$52.2\pm0.4$&$51.8\pm1.8$&$51.4\pm0.4$\\
		\ac{ic} compactness loss (Def. \ref{def:comp})&machine types, sections and attribute information ($342$)&$80.7\pm1.9$&$73.7\pm1.0$&$74.5\pm0.9$&$62.1\pm1.2$&$78.1\pm0.8$&$63.3\pm0.9$\\
		\ac{ic} compactness loss (Def. \ref{def:comp}) + \ac{cce}&machine types, sections and attribute information ($342$)&$82.5\pm0.7$&$75.2\pm0.7$&$75.5\pm0.6$&$61.2\pm1.6$&$79.0\pm0.6$&$64.8\pm0.9$\\
        %\textcolor{red}{AdaCos loss (Def. \ref{def:adacos}), trainable centers&machine types, sections and attribute information ($342$)&$\pm$&$\pm$&$\pm$&$\pm$&$\pm$&$\pm$}\\
		%\textcolor{red}{sub-cluster AdaCos loss (Def. \ref{def:sc_adacos}), trainable centers&machine types, sections and attribute information ($342$)&$\pm$&$\pm$&$\pm$&$\pm$&$\pm$&$\pm$}\\
		AdaCos loss (Def. \ref{def:adacos})&machine types, sections and attribute information ($342$)&$83.0\pm1.3$&$75.2\pm1.8$&$75.4\pm1.0$&$60.9\pm0.8$&$79.2\pm0.9$&$64.3\pm0.7$\\
		sub-cluster AdaCos loss (Def. \ref{def:sc_adacos})&machine types, sections and attribute information ($342$)&$\pmb{84.2\pm0.8}$&$\pmb{76.5\pm0.9}$&$\pmb{78.5\pm0.9}$&$\pmb{62.5\pm0.9}$&$\pmb{81.4\pm0.7}$&$\pmb{66.6\pm0.9}$\\
		\midrule
        \multicolumn{8}{c}{DCASE2022 evaluation set}\\
		\midrule
		\multirow{2}{*}{loss}&\multirow{2}{*}{classes of auxiliary task (number of classes)}&\multicolumn{2}{c}{source domain}&\multicolumn{2}{c}{target domain}&\multicolumn{2}{c}{both domains}\\
		&&AUC&pAUC&AUC&pAUC&AUC&pAUC\\
		%loss (sub-clusters)&classes (number of classes)&AUC&pAUC\\
        \midrule
		\ac{ic} compactness loss (Def. \ref{def:comp})&none ($1$)&$49.9\pm0.8$&$50.6\pm0.4$&$51.0\pm0.4$&$51.0\pm0.7$&$50.9\pm0.5$&$50.3\pm0.4$\\
		\ac{ic} compactness loss (Def. \ref{def:comp})&machine types ($7$)&$59.6\pm1.3$&$56.9\pm0.5$&$57.6\pm1.8$&$53.8\pm0.9$&$59.3\pm1.5$&$54.6\pm0.6$\\
		\ac{ic} compactness loss (Def. \ref{def:comp})&machine types and sections ($42$)&$70.8\pm1.2$&$62.1\pm0.7$&$61.7\pm0.8$&$55.4\pm1.0$&$66.3\pm0.6$&$56.5\pm0.4$\\
		\ac{ic} compactness loss (Def. \ref{def:comp})&machine types and sections, models trained individually ($1$)&$52.9\pm1.4$&$51.7\pm0.5$&$54.5\pm0.6$&$51.6\pm0.3$&$54.2\pm0.8$&$51.2\pm0.3$\\
		\ac{ic} compactness loss (Def. \ref{def:comp})&machine types, sections and attribute information ($342$)&$73.7\pm0.5$&$63.4\pm0.7$&$67.9\pm1.0$&$57.8\pm1.3$&$70.9\pm0.6$&$58.5\pm0.9$\\
		\ac{ic} compactness loss (Def. \ref{def:comp}) + \ac{cce}&machine types, sections and attribute information ($342$)&$74.7\pm0.7$&$64.9\pm1.1$&$69.2\pm0.7$&$59.8\pm1.3$&$71.9\pm0.6$&$59.5\pm1.0$\\
        %\textcolor{red}{AdaCos loss (Def. \ref{def:adacos}), trainable centers&machine types, sections and attribute information ($342$)&$\pm$&$\pm$&$\pm$&$\pm$&$\pm$&$\pm$}\\
		%\textcolor{red}{sub-cluster AdaCos loss (Def. \ref{def:sc_adacos}), trainable centers&machine types, sections and attribute information ($342$)&$\pm$&$\pm$&$\pm$&$\pm$&$\pm$&$\pm$}\\
		AdaCos loss (Def. \ref{def:adacos})&machine types, sections and attribute information ($342$)&$76.3\pm1.0$&$\pmb{66.0\pm0.5}$&$\pmb{69.9\pm0.8}$&$\pmb{59.9\pm1.5}$&$73.2\pm0.4$&$\pmb{60.1\pm0.9}$\\
		sub-cluster AdaCos loss (Def. \ref{def:sc_adacos})&machine types, sections and attribute information ($342$)&$\pmb{76.8\pm0.8}$&$65.8\pm0.2$&$69.8\pm0.5$&$59.7\pm1.1$&$\pmb{73.4\pm0.5}$&$59.8\pm0.8$\\
        \midrule
        \multicolumn{8}{c}{DCASE2023 development set}\\
		\midrule
		\multirow{2}{*}{loss}&\multirow{2}{*}{classes of auxiliary task (number of classes)}&\multicolumn{2}{c}{source domain}&\multicolumn{2}{c}{target domain}&\multicolumn{2}{c}{both domains}\\
		&&AUC&pAUC&AUC&pAUC&AUC&pAUC\\
        \midrule
		\ac{ic} compactness loss (Def. \ref{def:comp})&none ($1$)&$50.7\pm3.5$&$52.6\pm0.3$&$45.3\pm1.9$&$50.1\pm0.5$&$48.9\pm1.4$&$50.9\pm0.4$\\
		\ac{ic} compactness loss (Def. \ref{def:comp})&machine types ($14$)&$67.3\pm2.7$&$63.0\pm1.4$&$67.8\pm1.2$&\pmb{$58.6\pm1.1$}&$67.4\pm1.4$&\pmb{$59.4\pm1.1$}\\
		\ac{ic} compactness loss (Def. \ref{def:comp})&machine types, models trained individually ($1$)&$46.7\pm1.9$&$51.7\pm0.6$&$45.9\pm3.2$&$50.4\pm0.8$&$47.6\pm2.1$&$50.7\pm0.6$\\
		\ac{ic} compactness loss (Def. \ref{def:comp})&machine types and attribute information ($186$)&$67.6\pm2.5$&$61.6\pm1.2$&$70.0\pm2.4$&$56.4\pm1.9$&$68.3\pm1.9$&$57.1\pm1.3$\\
		\ac{ic} compactness loss (Def. \ref{def:comp}) + \ac{cce}&machine types and attribute information ($186$)&\pmb{$70.1\pm1.5$}&\pmb{$63.3\pm1.3$}&$71.0\pm1.3$&$55.5\pm1.1$&$70.4\pm1.0$&$56.7\pm0.8$\\
		%\textcolor{red}{AdaCos loss (Def. \ref{def:adacos}), trainable centers&machine types and attribute information ($186$)&$\pm$&$\pm$&$\pm$&$\pm$&$\pm$&$\pm$}\\
		%\textcolor{red}{sub-cluster AdaCos loss (Def. \ref{def:sc_adacos}), trainable centers&machine types and attribute information ($186$)&$\pm$&$\pm$&$\pm$&$\pm$&$\pm$&$\pm$}\\
		AdaCos loss (Def. \ref{def:adacos})&machine types and attribute information ($186$)&$69.8\pm1.5$&$62.8\pm1.3$&$72.1\pm1.2$&$55.4\pm1.7$&\pmb{$71.2\pm0.7$}&$56.8\pm1.2$\\
		sub-cluster AdaCos loss (Def. \ref{def:sc_adacos})&machine types and attribute information ($186$)&$69.4\pm1.5$&$61.4\pm1.5$&\pmb{$72.4\pm1.6$}&$55.3\pm1.2$&$71.0\pm1.2$&$56.3\pm1.1$\\
        \midrule
        \multicolumn{8}{c}{DCASE2023 evaluation set}\\
		\midrule
		\multirow{2}{*}{loss}&\multirow{2}{*}{classes of auxiliary task (number of classes)}&\multicolumn{2}{c}{source domain}&\multicolumn{2}{c}{target domain}&\multicolumn{2}{c}{both domains}\\
		&&AUC&pAUC&AUC&pAUC&AUC&pAUC\\
        \midrule
		\ac{ic} compactness loss (Def. \ref{def:comp})&none ($1$)&$51.8\pm2.1$&$51.4\pm1.2$&$50.0\pm1.9$&$50.5\pm0.7$&$51.6\pm0.9$&$50.8\pm0.6$\\
		\ac{ic} compactness loss (Def. \ref{def:comp})&machine types ($14$)&$59.3\pm1.9$&$54.4\pm0.6$&$54.3\pm2.1$&$51.2\pm0.5$&$56.7\pm1.2$&$52.0\pm0.6$\\
		\ac{ic} compactness loss (Def. \ref{def:comp})&machine types, models trained individually ($1$)&$51.3\pm0.7$&$51.9\pm0.7$&$54.7\pm1.7$&$52.3\pm0.8$&$53.2\pm1.0$&$51.5\pm0.6$\\
		\ac{ic} compactness loss (Def. \ref{def:comp})&machine types and attribute information ($186$)&\pmb{$73.0\pm1.9$}&$62.1\pm1.4$&$58.9\pm2.7$&$55.1\pm1.4$&$64.1\pm1.8$&$55.6\pm0.8$\\
		\ac{ic} compactness loss (Def. \ref{def:comp}) + \ac{cce}&machine types and attribute information ($186$)&$72.6\pm1.4$&\pmb{$62.5\pm1.9$}&$62.2\pm2.6$&$56.2\pm0.8$&$67.2\pm0.7$&\pmb{$58.0\pm0.9$}\\
		%\textcolor{red}{AdaCos loss (Def. \ref{def:adacos}), trainable centers&machine types and attribute information ($186$)&$\pm$&$\pm$&$\pm$&$\pm$&$\pm$&$\pm$}\\
		%\textcolor{red}{sub-cluster AdaCos loss (Def. \ref{def:sc_adacos}), trainable centers&machine types and attribute information ($186$)&$\pm$&$\pm$&$\pm$&$\pm$&$\pm$&$\pm$}\\
		AdaCos loss (Def. \ref{def:adacos})&machine types and attribute information ($186$)&$72.3\pm1.7$&$62.1\pm1.4$&$61.6\pm3.1$&\pmb{$56.4\pm1.1$}&$67.0\pm1.5$&$57.4\pm0.9$\\
		sub-cluster AdaCos loss (Def. \ref{def:sc_adacos})&machine types and attribute information ($186$)&$72.1\pm1.9$&$61.3\pm1.5$&\pmb{$62.3\pm2.7$}&$56.0\pm0.7$&\pmb{$67.3\pm1.3$}&$57.4\pm0.7$\\
		\bottomrule
	\end{tabular}
\end{adjustbox}
\label{tab:performances}
\end{table*}
\subsection{System Description}
The focus of this work is to explain why angular margin losses work well for \ac{asd}.
This requires using different loss functions for training an \ac{asd} system.
To this end, the conceptually simple state-of-the-art system presented in \cite{wilkinghoff2023design}, which only consists of a single model and uses the same settings for all machine types, is utilized.
For all experiments conducted in this work, only the loss function used for training the system is altered.
The system utilizes a magnitude spectrogram as well as the whole magnitude spectrum as input representations and uses two different convolutional sub-models for handling these, resulting in two different embeddings.
Then, both embeddings are concatenated to obtain a single embedding and the sub-cluster AdaCos loss \cite{wilkinghoff2021sub} is applied with $16$ sub-clusters, which are initialized uniformly at random, for training the model.
For the magnitude spectrogram, temporal mean normalization is applied to reduce the effect of different acoustic domains and make both input feature representations a bit more different by removing constant frequency information from the spectrograms.
Furthermore, the model does not use bias terms or trainable clusters as this improves the \ac{asd} performance by avoiding trivial solutions as discussed before.
The model is trained for $10$ epochs with a batch size of $64$ using mixup \cite{zhang2017mixup} with a uniform distribution for sampling the mixing coefficient and is implemented in Tensorflow \cite{abadi2016tensorflow}.
\par
After training the model using an auxiliary classification task, embeddings are extracted for the recordings.
For each section of the dataset, k-means with $k=16$ is applied to all normal training samples belonging to the source domain of this section.
The goal is to represent the distribution of the normal embeddings and be able to compute an anomaly score by taking the minimum cosine distance to the mean embeddings belonging to the same section as a given test sample.
Note that these means do not correspond to the sub-clusters as some sub-clusters may not have been used by the network during training.
It is possible that the embeddings are clustered between the sub-clusters due to the complex dependence between the sub-clusters of the other classes.
Still, it has been shown taking the same number of clusters usually performs best \cite{wilkinghoff2021sub}.
Since there are only $10$ normal samples available for the target domain, the minimum over the direct cosine distances to the corresponding embeddings is used.
As a last step, the minimum of the minimum cosine distances belonging to both domains is used to have an \ac{asd} system that generalizes to both domains.
%The minimum of the cosine distances between a given test sample and the determined mean embeddings as well as all $10$ samples of the target domain that belong to the same section as the test sample is used as an anomaly score for this test sample.
Hence, a higher anomaly score indicates anomalous sounds whereas a smaller value indicates normal sounds.
More details about the system including a hyperlink to an open-source implementation can be found in \cite{wilkinghoff2023design}.

\subsection{Performance Evaluations}
Regardless of the loss function, training the \ac{asd} model without using anomalous samples is not directly targeting the \ac{asd} performance but only indirectly since the auxiliary task is aimed at obtaining embeddings suitable for \ac{asd}.
Although, there is a strong relation between the auxiliary and the \ac{asd} task, as otherwise training an \ac{asd} model by using an auxiliary task would not lead to usable representations, the actual \ac{asd} performance needs to be evaluated experimentally and cannot be investigated theoretically because there are no anomalous samples available during training.
Therefore, the resulting \ac{asd} performances obtained by minimizing both types of loss functions, angular margin losses and one-class losses, using individual auxiliary classification tasks will be evaluated first.
Furthermore, a combined loss consisting of the sum of the mean of the \ac{ic} compactness losses and an additional softmax layer with a \ac{cce} loss for classification, as proposed in \cite{perera2019learning}, is evaluated.
The results can be found in Tab. \ref{tab:performances}.
Note that it is also possible to divide the classification task into several different classification tasks as for example one task for the machine type and other ones for all or specific attributes \cite{wilkinghoff2021combining,venkatesh2022improved}.
However, in our experience this does not improve performance unless weights for the losses belonging to different machine types are manually tuned to improve the \ac{asd} performance.
Since this requires access to anomalous samples, tuning these weights is impossible in a truly semi-supervised setting.
\par
It can be seen that for both datasets the \ac{asd} performance improves with the number of classes being used for the auxiliary task.
When using only a single class for all data or for individual machine types and sections, the \ac{auc} is close to $50\%$, which corresponds to randomly guessing whether a sample is anomalous or not.
The most likely reason for this is the factory background noise contained in the recordings, which is highly diverse and contains many sound sources other than the target machine.
A model trained with a one-class loss does not know the difference between the sound events emitted by the machines to be monitored and any other sounds contained in the recordings.
The more complex (in terms of numbers of classes) the chosen auxiliary task is, the more information needs to be captured inside the embeddings for solving this task.
Additionally, the background noise does not contain any helpful information for learning to discriminate between the classes defined by the auxiliary task assuming the noise is not class-specific.
As a result, the model learns to monitor specific frequencies or temporal patterns important for specific machine types with specific settings and thus also learns to ignore the background noise and to isolate sounds emitted by the targeted machines.
Furthermore, it can be observed that using an explicit classification task improves performance on all dataset splits.
Ensuring an angular margin between the classes slightly improves the overall performance, but not significantly, often leading to very similar results.
The most likely reason is that by increasing intra-class similarity implicitly introduces a margin between different classes.
Still, using an angular margin loss does not have any drawbacks over using a compactness and a descriptiveness loss.
As a last observation, the sub-cluster AdaCos loss performs slightly better than the AdaCos loss on the development split of the DCASE2022 dataset while yielding a similar performance on the other dataset splits.
A possible explanation that there are no significant improvements on the DCASE2023 datasets when using an angular margin loss is that the auxiliary classification task is not as difficult as for the DCASE2022 dataset because there is only one section for each machine type.
Slight improvements in performance when using multiple sub-clusters for the AdaCos loss have been observed on the DCASE2020 dataset \cite{koizumi2020description} in \cite{wilkinghoff2021sub}.
Note that the DCASE2020 dataset only contains machine recordings with a single parameter setting for each section and no domain shifts, i.e. consists of a single source domain, and thus the task is very different from the much more difficult task considered here.
In conclusion, an angular margin loss for \ac{asd} in combination with an auxiliary classification task that uses as many meaningful classes as possible is an excellent choice when training an \ac{asd} system based on audio embeddings.
\par
In the previous paragraph, we made the assumption that the noise is not class-specific.
However, if there is a single class with very specific noise that is only present for this particular class or, even worse, if this is the case for all classes, then an auxiliary classification task will very likely not improve the results.
The reason is that the model does not learn to closely monitor the machine sound because also the background noise contains useful information for discriminating between the classes.
Therefore, assuming that the noise is not class-specific is essential and intuitively makes sense for machine condition monitoring as one would expect that at least some machines share the same noise distribution
when running in the same factory or acoustic environment.
Moreover, as shown in Theorem \autoref{thm:ang_comp}, minimizing an angular margin loss using an auxiliary classification task also explicitly increases intra-class similarity.
Hence, even if the noise is class-specific and thus the auxiliary classification task does not aid the \ac{asd} task, the performance is still as least as good as when not using a classification task at all but only minimizing the intra-class compactness losses and there should not be a disadvantage.

\subsection{Minimizing Compactness Loss by Minimizing an Angular Margin Loss}
In Theorem \ref{thm:ang_comp}, it has been shown that minimizing an angular margin loss also minimizes all \ac{ic} compactness losses and maximizes all inter-class compactness losses.
This fact is now verified experimentally by training a model using the sub-cluster AdaCos loss while also monitoring all compactness losses.
The results are depicted in Fig. \ref{fig:monitoring_losses} and Fig. \ref{fig:monitoring_losses_no_mixup}.
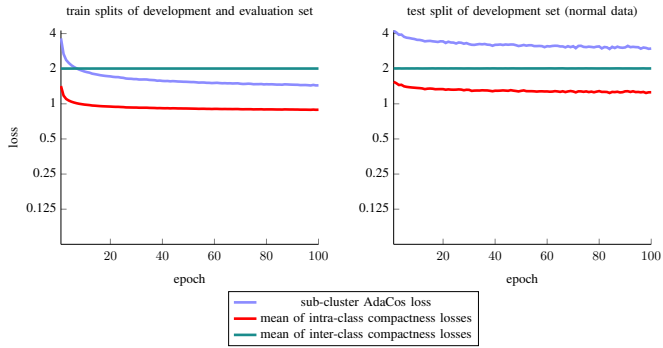
\begin{figure}[t]
    \centering
    \begin{adjustbox}{max width=\columnwidth}
          \begin{tikzpicture}
\begin{groupplot}[
    group style={
    group name=my plots,
    group size=2 by 1,
    xlabels at=edge bottom,
    ylabels at=edge left,
    horizontal sep=2cm,vertical sep=1cm,},
%\begin{axis}[
	axis y line*=left,
    axis x line*=bottom,
    xmin=1,
    xmax=100,
    ymode=log,
    ymin=0.0625,
    ymax=4.25,
    %enlarge x limits=0.05,
    %point meta=explicit symbolic,
    legend style={at={(0.65,-0.22)},anchor=north west,legend columns=1},
    ylabel=loss,
    xlabel=epoch,
    xticklabel style={align=center},
    yticklabel style={align=center},
    typeset ticklabels with strut,
    xlabel near ticks,
    ylabel near ticks,
    %nodes near coords,
    %ymajorgrids
    ytick={0.125,0.25,0.5,1,2,4},
    yticklabels={0.125,0.25,0.5,1,2,4}
]
\nextgroupplot[title=train splits of development and evaluation set]
\legend{sub-cluster AdaCos loss, mean of intra-class compactness losses, mean of inter-class compactness losses}
\addplot[blue!45, mark=none,line width=2pt] coordinates {
(1.0,3.641)(2.0,2.677)(3.0,2.375)(4.0,2.242)(5.0,2.142)(6.0,2.070)(7.0,2.012)(8.0,1.965)(9.0,1.929)(10.0,1.894)(11.0,1.868)(12.0,1.849)(13.0,1.814)(14.0,1.797)(15.0,1.780)(16.0,1.761)(17.0,1.747)(18.0,1.735)(19.0,1.726)(20.0,1.718)(21.0,1.703)(22.0,1.695)(23.0,1.694)(24.0,1.678)(25.0,1.669)(26.0,1.648)(27.0,1.641)(28.0,1.639)(29.0,1.629)(30.0,1.629)(31.0,1.616)(32.0,1.616)(33.0,1.618)(34.0,1.616)(35.0,1.603)(36.0,1.599)(37.0,1.594)(38.0,1.578)(39.0,1.582)(40.0,1.571)(41.0,1.567)(42.0,1.569)(43.0,1.559)(44.0,1.565)(45.0,1.560)(46.0,1.553)(47.0,1.551)(48.0,1.546)(49.0,1.545)(50.0,1.541)(51.0,1.536)(52.0,1.536)(53.0,1.529)(54.0,1.520)(55.0,1.523)(56.0,1.522)(57.0,1.514)(58.0,1.513)(59.0,1.506)(60.0,1.510)(61.0,1.514)(62.0,1.506)(63.0,1.500)(64.0,1.503)(65.0,1.498)(66.0,1.500)(67.0,1.494)(68.0,1.497)(69.0,1.499)(70.0,1.486)(71.0,1.479)(72.0,1.482)(73.0,1.488)(74.0,1.484)(75.0,1.472)(76.0,1.474)(77.0,1.474)(78.0,1.478)(79.0,1.467)(80.0,1.472)(81.0,1.469)(82.0,1.471)(83.0,1.466)(84.0,1.461)(85.0,1.462)(86.0,1.458)(87.0,1.469)(88.0,1.466)(89.0,1.462)(90.0,1.459)(91.0,1.457)(92.0,1.456)(93.0,1.447)(94.0,1.448)(95.0,1.449)(96.0,1.444)(97.0,1.437)(98.0,1.452)(99.0,1.438)(100.0,1.440)};
\addplot[red!100,mark=none,line width=2pt] coordinates {
(1.0,1.421)(2.0,1.177)(3.0,1.101)(4.0,1.067)(5.0,1.043)(6.0,1.026)(7.0,1.012)(8.0,1.001)(9.0,0.993)(10.0,0.985)(11.0,0.980)(12.0,0.975)(13.0,0.967)(14.0,0.964)(15.0,0.960)(16.0,0.956)(17.0,0.953)(18.0,0.951)(19.0,0.949)(20.0,0.947)(21.0,0.944)(22.0,0.942)(23.0,0.942)(24.0,0.939)(25.0,0.937)(26.0,0.932)(27.0,0.931)(28.0,0.930)(29.0,0.929)(30.0,0.929)(31.0,0.926)(32.0,0.926)(33.0,0.926)(34.0,0.926)(35.0,0.923)(36.0,0.922)(37.0,0.921)(38.0,0.918)(39.0,0.919)(40.0,0.917)(41.0,0.916)(42.0,0.917)(43.0,0.914)(44.0,0.916)(45.0,0.915)(46.0,0.913)(47.0,0.913)(48.0,0.912)(49.0,0.912)(50.0,0.911)(51.0,0.910)(52.0,0.909)(53.0,0.908)(54.0,0.906)(55.0,0.907)(56.0,0.907)(57.0,0.905)(58.0,0.905)(59.0,0.903)(60.0,0.905)(61.0,0.905)(62.0,0.903)(63.0,0.902)(64.0,0.903)(65.0,0.902)(66.0,0.902)(67.0,0.901)(68.0,0.902)(69.0,0.902)(70.0,0.899)(71.0,0.898)(72.0,0.899)(73.0,0.899)(74.0,0.899)(75.0,0.896)(76.0,0.897)(77.0,0.897)(78.0,0.898)(79.0,0.895)(80.0,0.896)(81.0,0.896)(82.0,0.896)(83.0,0.895)(84.0,0.894)(85.0,0.894)(86.0,0.893)(87.0,0.896)(88.0,0.895)(89.0,0.894)(90.0,0.894)(91.0,0.893)(92.0,0.893)(93.0,0.891)(94.0,0.891)(95.0,0.891)(96.0,0.890)(97.0,0.889)(98.0,0.892)(99.0,0.889)(100.0,0.890)};
\addplot[teal!90,mark=none,line width=2pt] coordinates {
(1.0,2.004)(2.0,2.005)(3.0,2.005)(4.0,2.005)(5.0,2.005)(6.0,2.005)(7.0,2.006)(8.0,2.006)(9.0,2.006)(10.0,2.006)(11.0,2.006)(12.0,2.005)(13.0,2.006)(14.0,2.006)(15.0,2.006)(16.0,2.006)(17.0,2.006)(18.0,2.006)(19.0,2.006)(20.0,2.006)(21.0,2.006)(22.0,2.006)(23.0,2.006)(24.0,2.006)(25.0,2.006)(26.0,2.006)(27.0,2.006)(28.0,2.006)(29.0,2.006)(30.0,2.006)(31.0,2.006)(32.0,2.006)(33.0,2.006)(34.0,2.006)(35.0,2.006)(36.0,2.006)(37.0,2.006)(38.0,2.006)(39.0,2.006)(40.0,2.006)(41.0,2.006)(42.0,2.006)(43.0,2.006)(44.0,2.006)(45.0,2.006)(46.0,2.006)(47.0,2.006)(48.0,2.006)(49.0,2.006)(50.0,2.006)(51.0,2.006)(52.0,2.006)(53.0,2.006)(54.0,2.006)(55.0,2.006)(56.0,2.006)(57.0,2.006)(58.0,2.006)(59.0,2.006)(60.0,2.006)(61.0,2.006)(62.0,2.006)(63.0,2.006)(64.0,2.006)(65.0,2.006)(66.0,2.006)(67.0,2.006)(68.0,2.006)(69.0,2.006)(70.0,2.006)(71.0,2.006)(72.0,2.006)(73.0,2.006)(74.0,2.006)(75.0,2.006)(76.0,2.006)(77.0,2.006)(78.0,2.006)(79.0,2.006)(80.0,2.006)(81.0,2.006)(82.0,2.006)(83.0,2.006)(84.0,2.006)(85.0,2.006)(86.0,2.006)(87.0,2.006)(88.0,2.006)(89.0,2.006)(90.0,2.006)(91.0,2.006)(92.0,2.006)(93.0,2.006)(94.0,2.006)(95.0,2.006)(96.0,2.006)(97.0,2.006)(98.0,2.006)(99.0,2.006)(100.0,2.006)};
\nextgroupplot[title=test split of development set (normal data)]
\addplot[blue!45,mark=none,line width=2pt] coordinates {
(1.0,4.222)(2.0,4.092)(3.0,3.907)(4.0,3.921)(5.0,3.752)(6.0,3.687)(7.0,3.663)(8.0,3.623)(9.0,3.592)(10.0,3.535)(11.0,3.528)(12.0,3.491)(13.0,3.432)(14.0,3.453)(15.0,3.463)(16.0,3.433)(17.0,3.414)(18.0,3.390)(19.0,3.434)(20.0,3.426)(21.0,3.323)(22.0,3.405)(23.0,3.355)(24.0,3.348)(25.0,3.289)(26.0,3.354)(27.0,3.336)(28.0,3.275)(29.0,3.314)(30.0,3.247)(31.0,3.207)(32.0,3.200)(33.0,3.229)(34.0,3.254)(35.0,3.236)(36.0,3.256)(37.0,3.215)(38.0,3.205)(39.0,3.145)(40.0,3.204)(41.0,3.209)(42.0,3.225)(43.0,3.194)(44.0,3.226)(45.0,3.209)(46.0,3.225)(47.0,3.223)(48.0,3.181)(49.0,3.129)(50.0,3.155)(51.0,3.194)(52.0,3.152)(53.0,3.177)(54.0,3.177)(55.0,3.126)(56.0,3.125)(57.0,3.133)(58.0,3.122)(59.0,3.060)(60.0,3.118)(61.0,3.104)(62.0,3.133)(63.0,3.146)(64.0,3.141)(65.0,3.103)(66.0,3.103)(67.0,3.047)(68.0,3.118)(69.0,3.061)(70.0,3.117)(71.0,2.999)(72.0,3.134)(73.0,3.143)(74.0,3.077)(75.0,3.062)(76.0,3.090)(77.0,3.073)(78.0,3.050)(79.0,3.089)(80.0,3.104)(81.0,2.985)(82.0,3.067)(83.0,3.047)(84.0,2.951)(85.0,3.079)(86.0,3.007)(87.0,3.088)(88.0,3.038)(89.0,3.023)(90.0,3.043)(91.0,3.137)(92.0,3.059)(93.0,3.079)(94.0,3.062)(95.0,3.056)(96.0,2.970)(97.0,3.037)(98.0,3.013)(99.0,2.961)(100.0,2.977)};
\addplot[red!100,mark=none,line width=2pt] coordinates {
(1.0,1.543)(2.0,1.503)(3.0,1.450)(4.0,1.455)(5.0,1.412)(6.0,1.397)(7.0,1.388)(8.0,1.380)(9.0,1.374)(10.0,1.368)(11.0,1.363)(12.0,1.355)(13.0,1.336)(14.0,1.348)(15.0,1.352)(16.0,1.347)(17.0,1.334)(18.0,1.336)(19.0,1.337)(20.0,1.336)(21.0,1.321)(22.0,1.330)(23.0,1.323)(24.0,1.321)(25.0,1.323)(26.0,1.329)(27.0,1.324)(28.0,1.307)(29.0,1.320)(30.0,1.317)(31.0,1.291)(32.0,1.298)(33.0,1.300)(34.0,1.299)(35.0,1.306)(36.0,1.303)(37.0,1.302)(38.0,1.293)(39.0,1.282)(40.0,1.295)(41.0,1.294)(42.0,1.297)(43.0,1.303)(44.0,1.298)(45.0,1.293)(46.0,1.309)(47.0,1.297)(48.0,1.288)(49.0,1.277)(50.0,1.292)(51.0,1.303)(52.0,1.284)(53.0,1.283)(54.0,1.283)(55.0,1.280)(56.0,1.273)(57.0,1.279)(58.0,1.280)(59.0,1.276)(60.0,1.277)(61.0,1.270)(62.0,1.273)(63.0,1.284)(64.0,1.277)(65.0,1.266)(66.0,1.281)(67.0,1.261)(68.0,1.277)(69.0,1.287)(70.0,1.276)(71.0,1.254)(72.0,1.279)(73.0,1.278)(74.0,1.267)(75.0,1.278)(76.0,1.261)(77.0,1.272)(78.0,1.256)(79.0,1.272)(80.0,1.285)(81.0,1.265)(82.0,1.266)(83.0,1.263)(84.0,1.239)(85.0,1.265)(86.0,1.260)(87.0,1.269)(88.0,1.252)(89.0,1.258)(90.0,1.280)(91.0,1.282)(92.0,1.264)(93.0,1.281)(94.0,1.261)(95.0,1.264)(96.0,1.255)(97.0,1.264)(98.0,1.239)(99.0,1.255)(100.0,1.256)};
\addplot[teal!90,mark=none,line width=2pt] coordinates {
(1.0,2.011)(2.0,2.010)(3.0,2.012)(4.0,2.008)(5.0,2.009)(6.0,2.010)(7.0,2.009)(8.0,2.010)(9.0,2.009)(10.0,2.011)(11.0,2.009)(12.0,2.008)(13.0,2.009)(14.0,2.009)(15.0,2.010)(16.0,2.012)(17.0,2.009)(18.0,2.011)(19.0,2.010)(20.0,2.010)(21.0,2.010)(22.0,2.009)(23.0,2.011)(24.0,2.010)(25.0,2.011)(26.0,2.010)(27.0,2.010)(28.0,2.008)(29.0,2.010)(30.0,2.010)(31.0,2.009)(32.0,2.008)(33.0,2.009)(34.0,2.008)(35.0,2.009)(36.0,2.010)(37.0,2.011)(38.0,2.009)(39.0,2.012)(40.0,2.011)(41.0,2.009)(42.0,2.010)(43.0,2.009)(44.0,2.011)(45.0,2.009)(46.0,2.011)(47.0,2.012)(48.0,2.010)(49.0,2.010)(50.0,2.009)(51.0,2.010)(52.0,2.010)(53.0,2.010)(54.0,2.010)(55.0,2.010)(56.0,2.010)(57.0,2.010)(58.0,2.010)(59.0,2.010)(60.0,2.011)(61.0,2.012)(62.0,2.011)(63.0,2.009)(64.0,2.011)(65.0,2.012)(66.0,2.013)(67.0,2.009)(68.0,2.008)(69.0,2.010)(70.0,2.010)(71.0,2.009)(72.0,2.011)(73.0,2.010)(74.0,2.010)(75.0,2.012)(76.0,2.010)(77.0,2.011)(78.0,2.010)(79.0,2.009)(80.0,2.009)(81.0,2.010)(82.0,2.010)(83.0,2.010)(84.0,2.009)(85.0,2.010)(86.0,2.011)(87.0,2.012)(88.0,2.011)(89.0,2.009)(90.0,2.010)(91.0,2.010)(92.0,2.010)(93.0,2.009)(94.0,2.009)(95.0,2.009)(96.0,2.010)(97.0,2.012)(98.0,2.009)(99.0,2.009)(100.0,2.010)};

%\end{axis}
\end{groupplot}
\end{tikzpicture}
    \end{adjustbox}
    \caption{Different losses after each epoch when training by minimizing sub-cluster AdaCos with a single sub-cluster per class and using mixup.}
    \label{fig:monitoring_losses}
\end{figure}
\begin{figure}[t]
    \centering
    \begin{adjustbox}{max width=\columnwidth}
          \begin{tikzpicture}
\begin{groupplot}[
    group style={
    group name=my plots,
    group size=2 by 1,
    xlabels at=edge bottom,
    ylabels at=edge left,
    horizontal sep=2cm,vertical sep=1cm,},
%\begin{axis}[
	axis y line*=left,
    axis x line*=bottom,
    xmin=1,
    xmax=100,
    ymode=log,
    ymin=0.0625,
    ymax=4.25,
    %enlarge x limits=0.05,
    %point meta=explicit symbolic,
    legend style={at={(0.65,-0.22)},anchor=north west,legend columns=1},
    ylabel=loss,
    xlabel=epoch,
    xticklabel style={align=center},
    yticklabel style={align=center},
    typeset ticklabels with strut,
    xlabel near ticks,
    ylabel near ticks,
    %nodes near coords,
    %ymajorgrids
    ytick={0.125,0.25,0.5,1,2,4},
    yticklabels={0.125,0.25,0.5,1,2,4}
]
\nextgroupplot[title=train splits of development and evaluation set]
\legend{AdaCos loss, mean of intra-class compactness losses, mean of inter-class compactness losses}
\addplot[blue!45, mark=none,line width=2pt] coordinates {
(1.0,2.312)(2.0,1.159)(3.0,1.027)(4.0,1.007)(5.0,1.005)(6.0,1.004)(7.0,0.992)(8.0,0.979)(9.0,0.968)(10.0,0.956)(11.0,0.945)(12.0,0.935)(13.0,0.923)(14.0,0.916)(15.0,0.907)(16.0,0.897)(17.0,0.887)(18.0,0.876)(19.0,0.868)(20.0,0.861)(21.0,0.852)(22.0,0.842)(23.0,0.837)(24.0,0.832)(25.0,0.823)(26.0,0.822)(27.0,0.815)(28.0,0.812)(29.0,0.806)(30.0,0.798)(31.0,0.799)(32.0,0.795)(33.0,0.798)(34.0,0.787)(35.0,0.791)(36.0,0.787)(37.0,0.786)(38.0,0.787)(39.0,0.784)(40.0,0.777)(41.0,0.782)(42.0,0.776)(43.0,0.784)(44.0,0.775)(45.0,0.771)(46.0,0.777)(47.0,0.773)(48.0,0.769)(49.0,0.770)(50.0,0.765)(51.0,0.770)(52.0,0.773)(53.0,0.767)(54.0,0.763)(55.0,0.767)(56.0,0.769)(57.0,0.761)(58.0,0.764)(59.0,0.765)(60.0,0.766)(61.0,0.767)(62.0,0.758)(63.0,0.763)(64.0,0.758)(65.0,0.763)(66.0,0.765)(67.0,0.763)(68.0,0.763)(69.0,0.752)(70.0,0.755)(71.0,0.764)(72.0,0.756)(73.0,0.760)(74.0,0.763)(75.0,0.754)(76.0,0.757)(77.0,0.761)(78.0,0.754)(79.0,0.761)(80.0,0.757)(81.0,0.753)(82.0,0.756)(83.0,0.762)(84.0,0.753)(85.0,0.752)(86.0,0.760)(87.0,0.755)(88.0,0.752)(89.0,0.757)(90.0,0.751)(91.0,0.749)(92.0,0.766)(93.0,0.754)(94.0,0.751)(95.0,0.748)(96.0,0.756)(97.0,0.763)(98.0,0.745)(99.0,0.747)(100.0,0.757)};
\addplot[red!100,mark=none,line width=2pt] coordinates {
(1.0,1.055)(2.0,0.682)(3.0,0.547)(4.0,0.455)(5.0,0.393)(6.0,0.355)(7.0,0.316)(8.0,0.294)(9.0,0.276)(10.0,0.259)(11.0,0.245)(12.0,0.233)(13.0,0.225)(14.0,0.217)(15.0,0.208)(16.0,0.201)(17.0,0.194)(18.0,0.185)(19.0,0.181)(20.0,0.175)(21.0,0.170)(22.0,0.162)(23.0,0.159)(24.0,0.155)(25.0,0.149)(26.0,0.148)(27.0,0.143)(28.0,0.140)(29.0,0.136)(30.0,0.130)(31.0,0.130)(32.0,0.128)(33.0,0.129)(34.0,0.120)(35.0,0.122)(36.0,0.118)(37.0,0.118)(38.0,0.118)(39.0,0.115)(40.0,0.109)(41.0,0.113)(42.0,0.107)(43.0,0.113)(44.0,0.107)(45.0,0.104)(46.0,0.107)(47.0,0.105)(48.0,0.101)(49.0,0.102)(50.0,0.098)(51.0,0.101)(52.0,0.102)(53.0,0.097)(54.0,0.095)(55.0,0.097)(56.0,0.100)(57.0,0.092)(58.0,0.095)(59.0,0.095)(60.0,0.095)(61.0,0.095)(62.0,0.088)(63.0,0.092)(64.0,0.088)(65.0,0.090)(66.0,0.092)(67.0,0.091)(68.0,0.091)(69.0,0.082)(70.0,0.085)(71.0,0.091)(72.0,0.085)(73.0,0.087)(74.0,0.089)(75.0,0.081)(76.0,0.083)(77.0,0.087)(78.0,0.081)(79.0,0.087)(80.0,0.084)(81.0,0.080)(82.0,0.082)(83.0,0.087)(84.0,0.080)(85.0,0.078)(86.0,0.085)(87.0,0.080)(88.0,0.078)(89.0,0.082)(90.0,0.077)(91.0,0.076)(92.0,0.089)(93.0,0.079)(94.0,0.076)(95.0,0.074)(96.0,0.079)(97.0,0.086)(98.0,0.070)(99.0,0.072)(100.0,0.082)};
\addplot[teal!85,mark=none,line width=2pt] coordinates {
(1.0,2.008)(2.0,2.009)(3.0,2.009)(4.0,2.009)(5.0,2.009)(6.0,2.010)(7.0,2.010)(8.0,2.010)(9.0,2.010)(10.0,2.010)(11.0,2.010)(12.0,2.010)(13.0,2.010)(14.0,2.010)(15.0,2.010)(16.0,2.010)(17.0,2.010)(18.0,2.010)(19.0,2.010)(20.0,2.010)(21.0,2.010)(22.0,2.010)(23.0,2.010)(24.0,2.010)(25.0,2.010)(26.0,2.010)(27.0,2.010)(28.0,2.010)(29.0,2.010)(30.0,2.010)(31.0,2.010)(32.0,2.010)(33.0,2.010)(34.0,2.010)(35.0,2.010)(36.0,2.010)(37.0,2.010)(38.0,2.010)(39.0,2.010)(40.0,2.010)(41.0,2.010)(42.0,2.010)(43.0,2.010)(44.0,2.010)(45.0,2.010)(46.0,2.010)(47.0,2.010)(48.0,2.010)(49.0,2.010)(50.0,2.010)(51.0,2.010)(52.0,2.010)(53.0,2.010)(54.0,2.010)(55.0,2.010)(56.0,2.010)(57.0,2.010)(58.0,2.010)(59.0,2.010)(60.0,2.010)(61.0,2.010)(62.0,2.010)(63.0,2.010)(64.0,2.010)(65.0,2.010)(66.0,2.010)(67.0,2.010)(68.0,2.010)(69.0,2.010)(70.0,2.010)(71.0,2.010)(72.0,2.010)(73.0,2.010)(74.0,2.010)(75.0,2.010)(76.0,2.010)(77.0,2.010)(78.0,2.010)(79.0,2.010)(80.0,2.010)(81.0,2.010)(82.0,2.010)(83.0,2.010)(84.0,2.010)(85.0,2.010)(86.0,2.010)(87.0,2.010)(88.0,2.010)(89.0,2.010)(90.0,2.010)(91.0,2.010)(92.0,2.010)(93.0,2.010)(94.0,2.010)(95.0,2.010)(96.0,2.010)(97.0,2.010)(98.0,2.010)(99.0,2.010)(100.0,2.010)};
\nextgroupplot[title=test split of development set (normal data)]
\addplot[blue!45,mark=none,line width=2pt] coordinates {
(1.0,4.039)(2.0,3.783)(3.0,3.707)(4.0,3.633)(5.0,3.550)(6.0,3.492)(7.0,3.446)(8.0,3.382)(9.0,3.378)(10.0,3.365)(11.0,3.367)(12.0,3.373)(13.0,3.379)(14.0,3.386)(15.0,3.359)(16.0,3.315)(17.0,3.341)(18.0,3.151)(19.0,3.373)(20.0,3.244)(21.0,3.304)(22.0,3.307)(23.0,3.310)(24.0,3.286)(25.0,3.287)(26.0,3.089)(27.0,3.183)(28.0,3.295)(29.0,3.254)(30.0,3.235)(31.0,3.233)(32.0,3.229)(33.0,3.248)(34.0,3.261)(35.0,3.262)(36.0,3.196)(37.0,3.143)(38.0,3.216)(39.0,3.237)(40.0,3.154)(41.0,3.214)(42.0,3.231)(43.0,3.168)(44.0,3.115)(45.0,3.176)(46.0,3.203)(47.0,3.205)(48.0,3.139)(49.0,3.168)(50.0,3.179)(51.0,3.185)(52.0,3.216)(53.0,3.123)(54.0,3.197)(55.0,3.161)(56.0,3.158)(57.0,3.189)(58.0,3.181)(59.0,3.155)(60.0,3.090)(61.0,3.230)(62.0,3.206)(63.0,3.124)(64.0,3.100)(65.0,3.199)(66.0,3.193)(67.0,3.200)(68.0,3.118)(69.0,3.122)(70.0,3.071)(71.0,3.177)(72.0,3.214)(73.0,3.208)(74.0,3.282)(75.0,3.228)(76.0,3.281)(77.0,3.277)(78.0,3.244)(79.0,3.301)(80.0,3.129)(81.0,3.236)(82.0,3.052)(83.0,3.171)(84.0,3.209)(85.0,3.213)(86.0,3.178)(87.0,3.257)(88.0,3.244)(89.0,3.279)(90.0,3.075)(91.0,3.221)(92.0,3.251)(93.0,3.237)(94.0,3.066)(95.0,3.258)(96.0,3.272)(97.0,3.267)(98.0,3.241)(99.0,2.969)(100.0,3.266)};
\addplot[red!100,mark=none,line width=2pt] coordinates {
(1.0,1.461)(2.0,1.357)(3.0,1.321)(4.0,1.273)(5.0,1.259)(6.0,1.222)(7.0,1.210)(8.0,1.197)(9.0,1.179)(10.0,1.170)(11.0,1.160)(12.0,1.147)(13.0,1.152)(14.0,1.147)(15.0,1.134)(16.0,1.135)(17.0,1.124)(18.0,1.115)(19.0,1.126)(20.0,1.111)(21.0,1.116)(22.0,1.112)(23.0,1.103)(24.0,1.100)(25.0,1.099)(26.0,1.086)(27.0,1.095)(28.0,1.111)(29.0,1.087)(30.0,1.081)(31.0,1.087)(32.0,1.100)(33.0,1.085)(34.0,1.087)(35.0,1.089)(36.0,1.064)(37.0,1.077)(38.0,1.081)(39.0,1.075)(40.0,1.068)(41.0,1.080)(42.0,1.088)(43.0,1.079)(44.0,1.053)(45.0,1.064)(46.0,1.093)(47.0,1.067)(48.0,1.057)(49.0,1.063)(50.0,1.077)(51.0,1.062)(52.0,1.065)(53.0,1.063)(54.0,1.077)(55.0,1.044)(56.0,1.069)(57.0,1.064)(58.0,1.046)(59.0,1.064)(60.0,1.034)(61.0,1.068)(62.0,1.067)(63.0,1.063)(64.0,1.044)(65.0,1.060)(66.0,1.067)(67.0,1.067)(68.0,1.040)(69.0,1.069)(70.0,1.071)(71.0,1.065)(72.0,1.059)(73.0,1.069)(74.0,1.061)(75.0,1.053)(76.0,1.071)(77.0,1.060)(78.0,1.062)(79.0,1.095)(80.0,1.064)(81.0,1.060)(82.0,1.065)(83.0,1.055)(84.0,1.046)(85.0,1.060)(86.0,1.074)(87.0,1.049)(88.0,1.069)(89.0,1.080)(90.0,1.059)(91.0,1.060)(92.0,1.075)(93.0,1.063)(94.0,1.053)(95.0,1.085)(96.0,1.084)(97.0,1.065)(98.0,1.061)(99.0,1.061)(100.0,1.083)};
\addplot[teal!90,mark=none,line width=2pt] coordinates {
(1.0,2.007)(2.0,2.005)(3.0,2.008)(4.0,2.007)(5.0,2.006)(6.0,2.007)(7.0,2.008)(8.0,2.006)(9.0,2.006)(10.0,2.007)(11.0,2.008)(12.0,2.007)(13.0,2.007)(14.0,2.008)(15.0,2.008)(16.0,2.007)(17.0,2.007)(18.0,2.008)(19.0,2.007)(20.0,2.008)(21.0,2.008)(22.0,2.007)(23.0,2.007)(24.0,2.008)(25.0,2.008)(26.0,2.008)(27.0,2.008)(28.0,2.008)(29.0,2.007)(30.0,2.008)(31.0,2.008)(32.0,2.007)(33.0,2.008)(34.0,2.008)(35.0,2.007)(36.0,2.007)(37.0,2.007)(38.0,2.008)(39.0,2.007)(40.0,2.008)(41.0,2.008)(42.0,2.007)(43.0,2.008)(44.0,2.008)(45.0,2.008)(46.0,2.007)(47.0,2.007)(48.0,2.007)(49.0,2.008)(50.0,2.007)(51.0,2.007)(52.0,2.007)(53.0,2.008)(54.0,2.008)(55.0,2.008)(56.0,2.007)(57.0,2.008)(58.0,2.009)(59.0,2.007)(60.0,2.008)(61.0,2.008)(62.0,2.008)(63.0,2.007)(64.0,2.008)(65.0,2.008)(66.0,2.009)(67.0,2.008)(68.0,2.008)(69.0,2.008)(70.0,2.008)(71.0,2.008)(72.0,2.008)(73.0,2.009)(74.0,2.008)(75.0,2.009)(76.0,2.008)(77.0,2.008)(78.0,2.008)(79.0,2.008)(80.0,2.009)(81.0,2.008)(82.0,2.008)(83.0,2.008)(84.0,2.009)(85.0,2.008)(86.0,2.009)(87.0,2.008)(88.0,2.008)(89.0,2.008)(90.0,2.007)(91.0,2.008)(92.0,2.008)(93.0,2.008)(94.0,2.009)(95.0,2.008)(96.0,2.008)(97.0,2.008)(98.0,2.008)(99.0,2.008)(100.0,2.008)};

%\end{axis}
\end{groupplot}
\end{tikzpicture}
    \end{adjustbox}
    \caption{Different losses after each epoch when training by minimizing AdaCos and not using mixup.}
    \label{fig:monitoring_losses_no_mixup}
\end{figure}
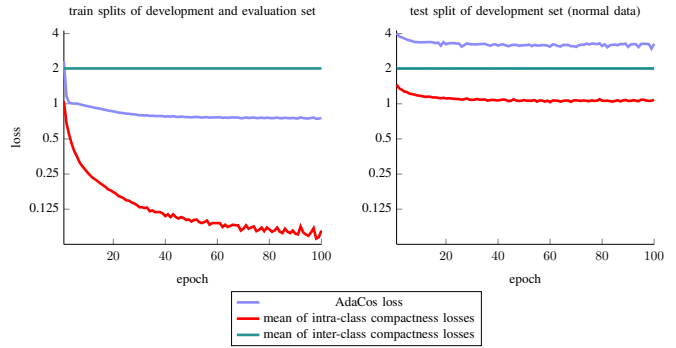
Regardless of the dataset splits and regardless of using or not using mixup, the angular margin loss and the mean of the \ac{ic} compactness losses are decreasing during training.
The mean of the inter-class compactness loss is constantly equal to $2$, even without training.
The reason is that all sub-cluster centers in this work are constant, randomly initialized and projected to the unit sphere.
Hence, By Lemma \ref{lem:cos_id}, a squared Euclidean distance of $2$ corresponds to an angle of $\frac{\pi}{2}$, i.e. orthogonality.
The most likely reason is that the randomly initialized center vectors are approximately orthogonal with very high probability because of the high dimension $D=256$ of the embedding space.
Thus, samples that are similar to the center of one class will be approximately orthogonal to the centers of the other classes.
Overall, this is exactly the expected behavior as predicted by Theorem \ref{thm:ang_comp} and therefore verifies the theoretical results.
Note that smaller loss values do not correspond to a better \ac{asd} performance because minimizing these losses only optimizes the performance for the auxiliary task, which is not the same as the \ac{asd} task.

\subsection{Visualizing Normal and Anomalous Regions in Input Representations as Perceived by the System}
\begin{figure*}[t]
\centering
\begin{adjustbox}{max width=\textwidth}
\begin{tabular}{ccc}
\subfloat[Spectrogram of an anomalous gearbox sound.]{\includegraphics{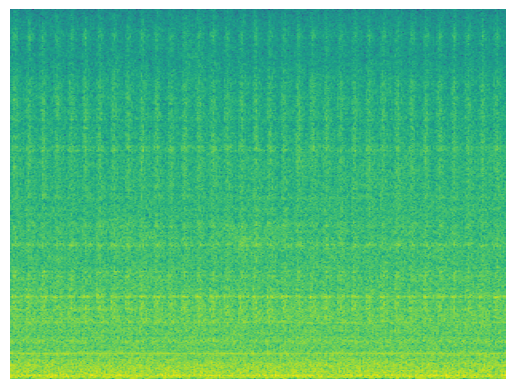}\label{fig:rise_a}} & 
\subfloat[Importance map of an anomalous gearbox sound when using sub-cluster AdaCos.]{\includegraphics{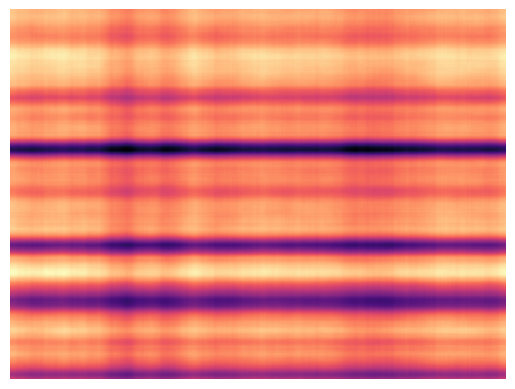}\label{fig:rise_g}} & 
\subfloat[Importance map of an anomalous gearbox sound when using compactness loss.]{\includegraphics{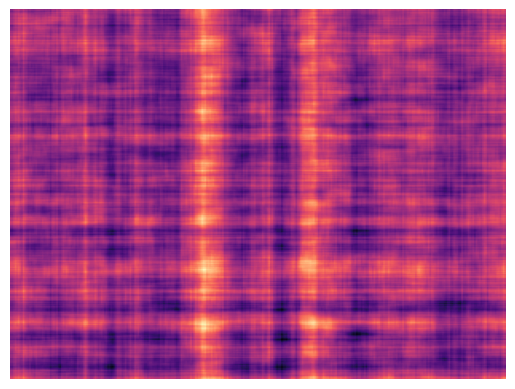}\label{fig:rise_d}}\\
\subfloat[Spectrogram of a normal valve sound.]{\includegraphics{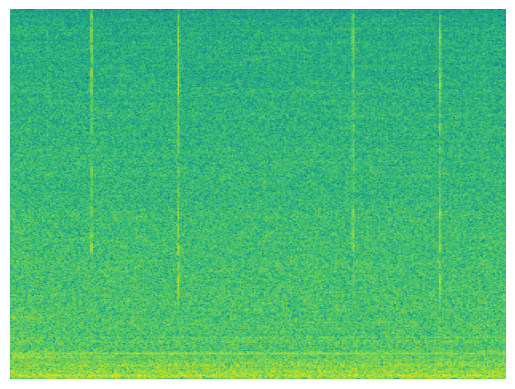}\label{fig:rise_b}}&
\subfloat[Importance map of a normal valve sound when using sub-cluster AdaCos.]{\includegraphics{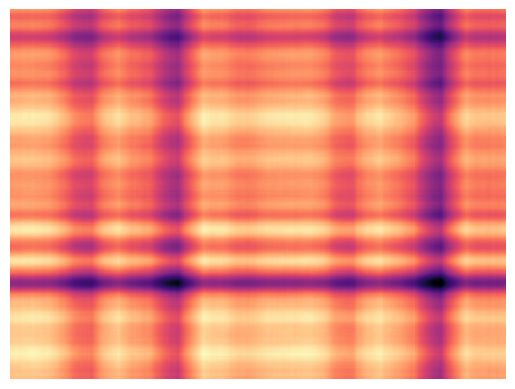}\label{fig:rise_h}}&
\subfloat[Importance map of a normal valve sound when using compactness loss.]{\includegraphics{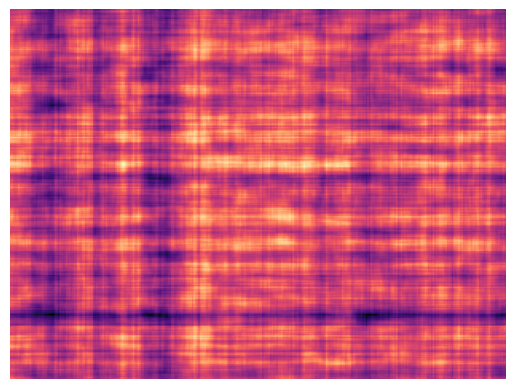}\label{fig:rise_e}}\\
\end{tabular}
\end{adjustbox}
\caption{Log scaled spectrograms (left column), importance maps obtained with \ac{rise} when training with the sub-cluster AdaCos loss and classifying between different machine types, sections and attribute information (middle column), and importance maps obtained with \ac{rise} when training with an \ac{ic} compactness loss and no auxiliary classification task (right column) for two different recordings belonging to the test split of the development set (rows). For the importance maps, blue colors indicate normal regions and yellow colors indicate regions that are found to be anomalous by the model. All subfigures use individual color scales to improve visual appearance for differently scaled importance maps and thus colors of different subfigures cannot be compared to each other.}
    \label{fig:rise_plots}
\end{figure*}
\begin{figure*}[t]
	\centering
    \begin{adjustbox}{max width=\textwidth}
    \includegraphics[page=1]{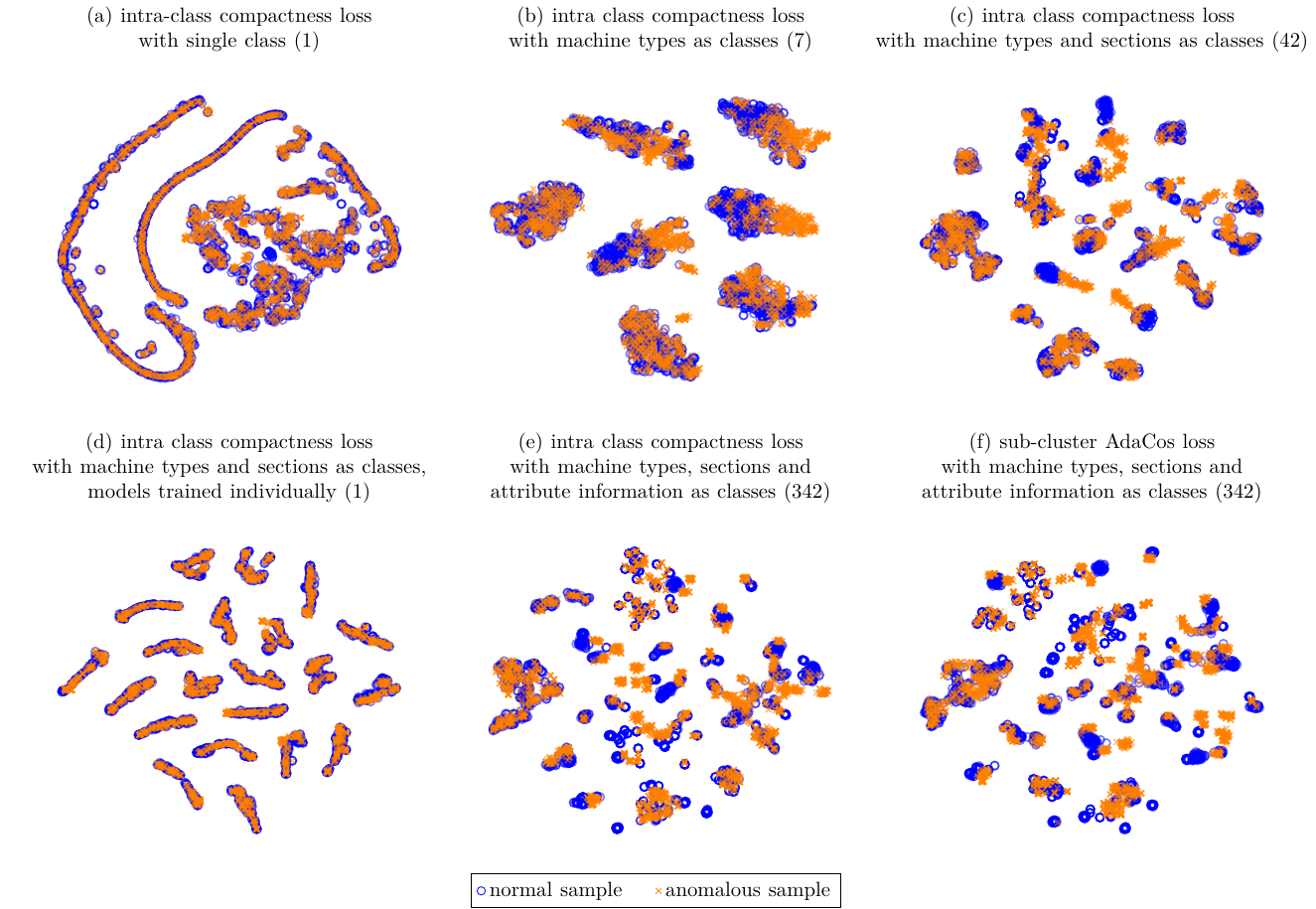}
    \end{adjustbox}
    \caption{Visualizations of the test split of the development set in the learned embedding space for different loss functions and auxiliary tasks using \ac{tsne}. Numbers in brackets denote the number of different classes used for the auxiliary task.}
    \label{fig:tsne}
\end{figure*}
To further investigate the effect of using an auxiliary task with multiple classes, another experiment using \ac{rise} \cite{petsiuk2018rise} is carried out.
\Ac{rise} highlights regions of the input representations that are considered normal or anomalous by the \ac{asd} system.
Our goal is to show that utilizing an auxiliary classification task for training the system, as done when minimizing an angular margin loss, enables the system to closely monitor specific machine sounds by focusing on regions belonging to specific patterns of the input data.
Although the \ac{asd} performance is worse when only using spectrograms as input representations \cite{wilkinghoff2023design}, for these experiments a model using only spectrograms as input has been trained.
The reason is that these representations are visually more appealing for the human eye than waveforms or spectra and thus more suitable to visually highlight normal and anomalous regions.
\par
To visualize areas of the input representation responsible for a decision, \ac{rise} masks random entries of the spectrograms using binary masks and evaluates the \ac{asd} score using the masked spectrogram.
This step is repeated for many iterations.
Then, the sum of the masks weighted with the corresponding \ac{asd} scores is taken and normalized with the expected value of a random binary mask, which depends on the chosen sampling distribution.
The result is called an \emph{importance map} and visualizes the impact of specific regions of a spectrogram on the resulting anomaly score.
\par
The problem is that the dimension of the spectrograms is very high because a time dimension of $T=311$ and a frequency dimension of $F=513$ is used.
Thus, there are $2^{T\cdot F}=2^{159543}$ possible binary masks and thus \ac{rise} requires clearly too many iterations.
To significantly reduce the search space from $2^{F\cdot T}$ to $2^{F+T}$, individual time and frequency masks are randomly generated with a probability of $0.25$ for a time step or frequency bin to be masked and  both masks are combined by element-wise multiplication.
This restriction is not too severe because most sounds emitted by machines are relatively stable over time with specific frequencies (e.g. fans), consist of multiple stable sound events with on- and offsets (e.g. slide rails) or only consist of short sound events over a wide frequency range with a specific temporal structure (e.g. valves).
For further reduction of the search space, small binary masks are generated and then up-sampled and randomly cropped to match the dimension of the spectrogram to be masked as proposed in \cite{petsiuk2018rise}.
More concretely, we used time masks of size $20$ and frequency masks of size $34$ resulting in a search space of $2^{54}$, which is still very large but much smaller than before.
For generating a single importance map, $640,000$ iterations have been used.
\par
Magnitude spectrograms (visualized in log scale) and corresponding importance maps belonging to two different samples using i) a model trained with an \ac{ic} compactness loss without an auxiliary task, and ii) a model trained with the sub-cluster AdaCos loss and an auxiliary task for classifying between different machine types, sections and attribute information are depicted in Fig. \ref{fig:rise_plots}.
For the depicted importance maps, blue colors indicate normal regions and yellow colors indicate anomalous regions as perceived by the system.
Note that, since the system does not yield perfect results, these regions do not need to really belong to normal and anomalous regions.
As there are only binary labels, indicating normal or anomalous samples, available for each entire audio recording and we are no subject matter experts for machine condition monitoring, we do not know which regions are normal or anomalous.
Still, for the purpose of showing that utilizing meta information when training a model, as done by angular margin losses, helps the system to have a better understanding of the structure of the data these plots are sufficient.
There are several observations to be made.
Comparing the representations depicted in Fig. \ref{fig:rise_g} and \ref{fig:rise_h} with the ones depicted in Fig. \ref{fig:rise_d} and \ref{fig:rise_e}, we suggest that using sub-cluster AdaCos, i.e. Fig. \ref{fig:rise_g} and \ref{fig:rise_h}, more clearly shows time and frequency structures at a resolution correlating with the structures resp. acoustic events visible in the spectrograms depicted in Fig. \ref{fig:rise_a} and \ref{fig:rise_b}.
\par
For the anomalous gearbox example (Fig. \ref{fig:rise_a}), the importance map depicted in Fig. \ref{fig:rise_g} shows that specific frequencies are monitored and considered to be normal or anomalous.
Interestingly, the normal frequency regions (in blue) in Fig. \ref{fig:rise_g} exactly correspond to the frequencies containing high energy (Fig. \ref{fig:rise_a}) showing that the model expects a gearbox sound from this section to have high energy in these regions.
The frequencies that are considered most anomalous, which mostly corresponds to the frequency range between the bottom two normal frequency bands, only contain some energy.
This indicates that a normal machine sound should either contain no energy or much more energy for these frequencies.
In contrast to this, the importance map depicted in Fig. \ref{fig:rise_d} does not monitor specific frequencies and the only clearly visible structures are two vertical lines indicating anomalous regions (in yellow).
Although we cannot guarantee that the regions in the spectrogram corresponding to these vertical lines are not anomalous, at least visually there is no energy present in these locations.
Since the recordings of the machine sounds do not start and end at the same fixed time steps, it does not make sense that the model expects temporal patterns at exactly these time steps that are missing and to thus consider such patterns to be anomalous.
Therefore, it seems that these structures are errors of the model.
\par
The importance maps belonging to the normal valve example (Fig. \ref{fig:rise_b}) show a similar behavior but for temporal patterns in addition to specific frequencies.
Here, the main four normal vertical patterns in the importance map shown in Fig. \ref{fig:rise_h} correspond the four high energy patterns of the spectrogram showing that the system views these temporal patterns as normal for a valve sound.
In contrast, the importance map depicted in Fig. \ref{fig:rise_e} does not show that the system has learned to detect these patterns and looks almost random.
\par
Overall, the depicted results add further confidence to the claim that training a model with an auxiliary classification task with many classes enables the model to learn much more meaningful embeddings, also leading to much better capabilities for detecting anomalous sound events than a model trained with only a single class.

\subsection{Visualizing the Resulting Embedding Spaces Using t-SNE}
\begin{table}[t]
	\centering
	\caption{Mean and standard deviation of the average Euclidean distance between the \ac{tsne} projections of each anomalous sample and the closest normal sample over five trials for different losses and using different auxiliary tasks.}
\begin{adjustbox}{max width=\columnwidth}
	\begin{tabular}{lll}
		\toprule
        loss & classes of auxiliary task (number of classes) & average distance\\
		\midrule
		\ac{ic} compactness loss&none ($1$)&$0.485\pm0.007$\\
		\ac{ic} compactness loss&machine types ($7$)&$1.636\pm0.037$\\
		\ac{ic} compactness loss&machine types and sections ($42$)&$2.175\pm0.075$\\
		\ac{ic} compactness loss&machine types and sections, models trained individually ($1$)&$0.559\pm0.002$\\
		\ac{ic} compactness loss&machine types, sections and attribute information ($342$)&$2.646\pm0.045$\\
		sub-cluster AdaCos loss&machine types, sections and attribute information ($342$)&$2.947\pm0.022$\\
		\bottomrule
	\end{tabular}
\end{adjustbox}
\label{tab:distances}
\end{table}
As a last experiment, the embedding spaces resulting from using different loss functions and auxiliary tasks are visualized in Figure \ref{fig:tsne} using \ac{tsne} \cite{vandermaaten2008visualizing}.
Note that by Lemma \ref{lem:cos_id} it does not matter whether \ac{tsne} is evaluated with the cosine distance or the Euclidean distance because both are equivalent when determining the degree of similarity between samples on the unit sphere.
It can be seen that using more classes for the auxiliary task helps to separate normal and anomalous samples (Fig. \ref{fig:tsne}b,c,e,f).
When only using a single class (Fig. \ref{fig:tsne}a) or individually trained models (Fig. \ref{fig:tsne}d), there is no visual difference between normal and anomalous samples.
However, it can also be seen that the model has not learned a trivial solution as the embedding spaces did not collapse to a single fixed point, which would correspond to a uniformly distributed \ac{tsne} embedding space.
Moreover, the \ac{asd} performance would be very close to $50\%$ as normal and anomalous samples would be indistinguishable in the embedding space.
Therefore, the applied regularization strategies, namely not using trainable centers and not using bias terms, work and a completely failed regularization is not the main underlying problem.
These visual impressions are verified by computing the average Euclidean distance between each anomalous sample and the closest normal sample in the \ac{tsne} embedding space.
The results can be found in Tab. \ref{tab:distances} and also agree with the performance results shown in Table \ref{tab:performances}.
Note that the distance in the original embedding space is implicitly captured by the ASD performance given in \ref{tab:performances} because the anomaly score is computed by taking the distance to the closest normal sample in the target domain and the closest mean in the source domain.
Again, the most likely explanation for the strong differences between the embedding spaces in terms of \ac{asd} capabilities is that using multiple classes enables the model to focus less on or even ignore the background noise and isolate the targeted machine sounds.
This helps the model to more robustly detect deviations from normal machine sounds despite the acoustically noisy recording conditions and thus results in better \ac{asd} performance.

\section{Conclusions}
\label{sec:conclusions}
In this work, it has been investigated why using angular margin losses works well for semi-supervised \ac{asd}.
To this end, it has been shown, both theoretically and experimentally, that reducing an angular margin loss also minimizes the \ac{ic} compactness loss while simultaneously maximizing the inter-class compactness loss.
Therefore, angular margin losses in combination with an auxiliary classification task can be viewed as regularized one-class losses preventing the model to learn trivial solutions.
In experiments conducted on the DCASE2022 and DCASE2023 \ac{asd} datasets for machine condition monitoring, it has been shown that using an auxiliary task with as many meaningful classes as possible and using an angular margin loss leads to significantly better \ac{asd} performance than using a one-class loss such as the \ac{ic} compactness loss.
Furthermore, \ac{rise} has been applied to create importance maps for different losses and \ac{tsne} has been used to visualize the resulting embedding spaces.
All the conducted experiments show that by using an angular margin the model used for extracting the embeddings learns to monitor relevant frequency bins and learns machine-specific temporal patterns. This enables the model to isolate machine sounds and effectively ignore background noise present in the recording explaining why angular margin losses with an auxiliary task are a good choice for training an \ac{asd} system.
\par
For future work, is is planned to investigate whether using auxiliary tasks based on self-supervised learning to obtain suitable representations of the data improves the resulting \ac{asd} performance.
In addition, sophisticated methods for visualizing anomalous regions of input representations should be developed as being able to localize these regions is very useful for practical applications and theoretical analysis of \ac{asd} systems.

\section*{Acknowledgments}
We would like to thank Paul M. Baggenstoss and Lukas Henneke as well as the anonymous reviewers for their valuable comments that improved the quality of this work.

{\appendix
\label{sec:appendix}
\section*{Proof of Lemma \ref{lem:cos_id}}
%\begin{proof}
\noindent Using only basic definitions, we obtain
\BEs\lVert x-y\rVert_2^2&=\sum_{i=1}^D(x_i-y_i)^2\\&=\sum_{i=1}^Dx_i^2+\sum_{i=1}^Dy_i^2-2\sum_{i=1}^Dx_iy_i\\
&=\lVert x\rVert_2^2+\lVert y\rVert_2^2-2\langle x,y\rangle\\
&=2\bigg(1-\frac{\langle x,y\rangle}{\lVert x\rVert_2\lVert y\rVert_2}\bigg)\\
&=2(1-\cos(x,y)),\EEs
which finishes the proof.
\hfill\qed
%\end{proof}
}

\bibliographystyle{IEEEtran}
\bibliography{refs}

%\newpage

\begin{IEEEbiography}[{\includegraphics[width=1in,height=1.25in,clip,keepaspectratio]{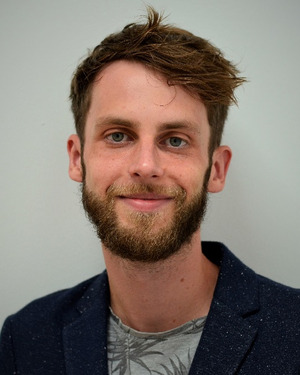}}]{Kevin Wilkinghoff} received his B.Sc. degree in Mathematics at the University of M\"unster, Germany, and his M.Sc. degree in Computer Science at the University of Bonn, Germany, in 2014 and 2017, respectively. 
Since 2017 he is a research associate at Fraunhofer FKIE.
Currently, he is working towards a Ph.D. degree in Computer Science at the University of Bonn.
His research interests include anomaly detection, open-set classification and representation learning for machine listening applications.
In 2021, he received the DCASE Best Paper Award.
\end{IEEEbiography}

\vspace{11pt}

\begin{IEEEbiography}[{\includegraphics[width=1in,height=1.25in,clip,keepaspectratio]{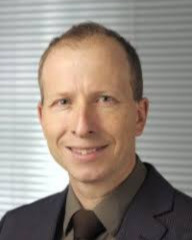}}]{Frank Kurth} studied Computer Science and Mathematics at Bonn University, Germany, where he recieved both a masters degree in Computer Science and the degree of a doctor of natural sciences (Dr. rer. nat.) in 1997 and 1999, respectively.
 From 1997-2007 he was with the Multimedia Signal Processing group at Bonn University where he finished his Habilitation in Computer Science in 2004 and was subsequently appointed apl. Professor in 2013. Since 2007 he is with Fraunhofer FKIE, Germany, where he currently heads a research group focused on physical layer signal analysis in the area of communications.
His research interests include the application of pattern recognition and machine learning techniques to audio, speech and communication signal processing. Dr. Kurth has recieved the 2000 Dissertation Award of the German Informatics Society (GI) and a 2000 Multimedia Award of the German Department of Economy and Technology. He is co-author of more than 100 publications and holds several patents. Dr. Kurth is a senior member of the IEEE.

\end{IEEEbiography}

\vfill

\end{document}